\begin{document}

\newtheorem{theorem}{Theorem}[section]
\newtheorem{lemma}{Lemma}[section]
\newtheorem{corollary}{Corollary}[section]
\newtheorem{claim}{Claim}[section]
\newtheorem{proposition}{Proposition}[section]
\newtheorem{definition}{Definition}[section]
\newtheorem{fact}{Fact}[section]
\newtheorem{example}{Example}[section]

\newcommand{\cA}{{\cal A}}
\newcommand{\cB}{{\cal B}}
\newcommand{\cC}{{\cal C}}
\newcommand{\cG}{{\cal G}}
\newcommand{\cN}{{\cal N}}
\newcommand{\cU}{{\cal U}}
\newcommand{\cT}{{\cal T}}
\newcommand{\cS}{{\cal S}}
\newcommand{\cL}{{\cal L}}
\newcommand{\cV}{{\cal V}}
\newcommand{\loc}{{\cal LOCAL}}

\newcommand{\ai}{\alpha_i}
\newcommand{\bi}{\beta_i}
\newcommand{\gi}{\gamma_i}
\newcommand{\di}{\delta_i}

\newcommand{\oai}{\overline{\alpha}_i}
\newcommand{\obi}{\overline{\beta}_i}
\newcommand{\ogi}{\overline{\gamma}_i}
\newcommand{\odi}{\overline{\delta}_i}

\newcommand{\qed}{\hfill $\square$ \smallbreak}
\newenvironment{proof}{\noindent{\bf Proof:}}{\qed}

\newcommand{\procend}{\hfill $\diamond$\medskip}

\newcommand{\oddRepair}{{\tt Odd\-Repair}}
\newcommand{\Deactivate}{{\tt Deactivate}}
\newcommand{\evenARepair}{{\tt Even\-Al\-most\-Re\-pair}}
\newcommand{\ringThree}{{\tt Ring\-Three\-Co\-lo\-ring}}
\newcommand{\ringLearning}{{\tt Ring\-Lear\-ning}}
\newcommand{\Elect}{{\tt Elect}}



\def\thefootnote{\fnsymbol{footnote}}

\title{{\bf Latecomers Help to Meet:\\
 Deterministic Anonymous Gathering in the Plane}}

\author{Andrzej Pelc\footnotemark[1]
\and
Ram Narayan Yadav\footnotemark[2]
}

\footnotetext[1]{
 D\'epartement d'informatique, Universit\'e du Qu\'ebec en Outaouais, Gatineau,
Qu\'ebec J8X 3X7, Canada. {\tt pelc@uqo.ca}. Partially supported by NSERC discovery grant 2018-03899
and by the Research Chair in Distributed Computing at the
Universit\'e du Qu\'ebec en Outaouais.}

\footnotetext[2]{D\'epartement d'informatique, Universit\'e du Qu\'ebec en Outaouais, Gatineau,
Qu\'ebec J8X 3X7, Canada. {\tt narayanram.1988@gmail.com}}

\maketitle

\thispagestyle{empty}

\begin{abstract}
A team of anonymous mobile agents represented by points freely moving in the plane have to gather at a single point and stop. Agents start at
different points of the plane and at possibly different times chosen by the adversary. They are equipped with compasses, a common unit of distance and clocks. They execute the same deterministic algorithm. When moving, agents travel at the same speed normalized to 1. When agents are at distance at most $\epsilon$, for some positive constant $\epsilon$ unknown to them, they see each other and can exchange all information known to date.

Due to the anonymity of the agents and the symmetry of the plane, gathering is impossible, e.g.,  if agents start simultaneously at distances larger than $\epsilon$. However, if some agents start with a delay with respect to others, gathering may become possible. In which situations such latecomers can enable gathering? To answer this question we consider initial configurations formalized as sets of pairs  $\{(p_1,t_1), (p_2,t_2),\dots , (p_n,t_n)\}$, for $n\geq 2$ where $p_i$ is the starting point of the $i$-th agent and $t_i$ is its starting time. An initial configuration is {\em gatherable} if agents starting at 
it can be gathered by some algorithm, even dedicated to this particular configuration. 
Our first result is a characterization of all gatherable initial configurations. It is then natural to ask if there is a universal deterministic algorithm that can gather all gatherable configurations of a given size. It turns out that the answer to this question is negative. Indeed, we show that all gatherable configurations can be partitioned into two sets: {\em bad} configurations and {\em good} configurations. We show that bad gatherable configurations (even of size 2) cannot be gathered by a common gathering algorithm. On the other hand, we prove that there is a universal algorithm that gathers all good configurations of a given size. 

 Then we ask the question of whether the exact knowledge of the number of agents is necessary to gather all good configurations. It turns out that the answer is no, and we prove a necessary and sufficient condition on the knowledge concerning the number of agents that an  algorithm gathering all good configurations must have.  

\vspace*{0.5cm}

\noindent
{\bf keywords:}  anonymous agent, gathering, symmetry breaking, plane

\vspace*{2cm}
\end{abstract}

\setcounter{page}{0}
\pagebreak

\section{Introduction}
A team of anonymous mobile agents represented by points freely moving in the plane have to gather at a single point and stop, using the same deterministic algorithm. Agents start at
different points of the plane and at possibly different times chosen by the adversary. They are equipped with compasses, a common unit of distance and clocks. When moving, agents travel at the same speed normalized to 1. When agents are at distance at most $\epsilon$, for some positive constant $\epsilon$ unknown to them, they see each other and can exchange all information known to date.

 In applications, agents may be mobile robots moving in a terrain and taking samples of the ground in situations when it is impossible for humans to execute this task, due, e.g.,  to safety hazards. The task of gathering is a widely studied symmetry breaking task. Its importance is due both to practical and to theoretical reasons. On the practical side, mobile robots may have to gather to exchange collected samples or measurements and to coordinate further actions. On the theoretical side, gathering is important because (as justified below) it is equivalent
 to the fundamental symmetry breaking task of leader election among the agents. When is it feasible in a completely anonymous scenario of identical agents roaming in the empty plane?
 Can symmetry between the agents be broken by the delays between their starting times?
 
 \noindent
 {\bf The model and the problem.}
 Agents are modeled as anonymous points moving in the plane. Each agent appears at a different point of the plane at some time chosen by the adversary, and starts executing the same deterministic algorithm. Each agent has an accurate compass, a common unit of distance and a clock. 
 Agents do not know {\em a priori} the positions of other agents nor their appearance times. The origin of the coordinate system of each agent is at its starting point and its clock starts at the appearance time. Agents can execute instructions of two types: ``go in direction $dir$ at distance $x$'' and ``stay put for time $t$''. When an agent moves, it travels at speed 1.
 The execution of any of the above instructions is interrupted when an agent gets at distance at most $\epsilon$ from another agent, where $\epsilon>0$ is a constant {\em a priori} unknown to the agents. In this case, called an {\em approach}, an agent can see all agents at distance at most $\epsilon$ from it (i.e., it sees their coordinates in its own coordinate system) and can instantaneously exchange with them all information known to date. 
 Notice that an approach can ``suddenly'' happen for two agents at distance strictly smaller than $\epsilon$, if one  or both of them appear in the plane at some time, close to each other. Agents have unbounded memory and from the computational point of view they are modeled as Turing machines. After an approach, an agent may execute a new instruction, possibly based on the information learned during the approach. The goal of gathering is for all agents to get to the same point of the plane and stop forever. 
 
 The theoretical importance of the gathering problem is due to the fact that it is equivalent to the most fundamental symmetry breaking problem
 among anonymous agents in the plane, namely to leader election (cf. \cite{Ly}). Leader election calls for one agent to become a leader and all other agents to become non-leaders. To see the equivalence
 between these two problems, first suppose that leader election among agents is accomplished. Then the following simple algorithm achieves gathering. The leader stops forever, and all other agents explore the plane in phases indexed by natural numbers. In phase $i$, the agents explores the plane at distance $i$ from its original position 
 using a rectangular spiral with jump $1/i$; if it sees the leader in some phase, it joins it and stops, otherwise it goes back to its original position and starts the next phase. In some phase, each agent sees and joins the leader, which implies gathering. Conversely, if gathering is accomplished, then all agents exchange information concerning their starting points
relative to the coordinate system whose origin is at the gathering point, and  they choose as leader the agent whose starting point is the largest in some linear ordering (e.g., lexicographic order of coordinates).
 
An initial configuration of agents is formalized as a set of pairs  $\{(p_1,t_1), (p_2,t_2),\dots , (p_n,t_n)\}$, for $n\geq 2$, where $p_i$ is the starting point of the $i$-th agent in some global coordinate system and $t_i \geq 0$ is its starting time, according to some global clock. Recall that the agents do not have access to any global coordinate system or any global clock. An initial configuration is {\em gatherable} if agents starting at 
it can be gathered by some deterministic algorithm, even dedicated to this particular configuration. The main problems considered in this paper are the following.
\begin{quotation}
\noindent
Which initial configurations are gatherable?\\  Does there exist a universal deterministic algorithm gathering all of them?
\end{quotation}

\noindent
{\bf Our contribution.}
Our first result is a characterization of all gatherable initial configurations: these are configurations for which $|t_i-t_j|\geq dist(p_i,p_j)-\epsilon$, for at least one pair of agents $i $ and $j$, where $dist$ is the Euclidean distance in the plane. It is then natural to ask if there is a universal deterministic algorithm that can gather all gatherable configurations of a given size. It turns out that the answer to this question is negative. Indeed, we show that all gatherable configurations can be partitioned into two sets: {\em bad} configurations and {\em good} configurations.
Bad gatherable configurations are those for which $|t_i-t_j|\leq dist(p_i,p_j)-\epsilon$ for all pairs $i,j$, and good configurations are all other gatherable configurations.  We show that bad gatherable configurations (even of size 2) cannot be gathered by a common gathering algorithm. On the other hand, we design a universal algorithm that gathers all good configurations of a given size. 

Then we ask the question of whether the exact knowledge of the number of agents is necessary to gather all good configurations. It turns out that the answer is no, and we prove a necessary and sufficient condition on the knowledge concerning the number of agents that an  algorithm gathering all good configurations must have.  We also show that the assumption that agents see other agents when they are at some close but positive distance, is necessary for gathering.

In view of the equivalence between gathering and leader election, our results completely solve the general symmetry breaking problem
between anonymous mobile agents in the plane.

\noindent
{\bf Related work.}
The problem of gathering mobile agents, also called rendezvous, was widely studied in the literature.
Models under which it was investigated can be classified along two main dichotomies. The first dichotomy concerns the way in which agents move: it can be either deterministic or randomized. The second dichotomy concerns the environment in which agents navigate: it can be a network modeled as a graph or a terrain modeled as the plane, possibly with obstacles.

An excellent survey of  randomized gathering in various models  can be found in
\cite{alpern02b}, cf. also  \cite{alpern95a,alpern02a,anderson90}. 
Deterministic gathering in networks was surveyed in \cite{Pe}.
Gathering many labeled agents in the presence of Byzantine agents was studied in \cite{BDD,DPP}. 
The gathering problem was also studied in the context of oblivious robot systems in the plane, cf.
\cite{CP05,FPSW}, and fault tolerant gathering of robots in the plane was studied, e.g., in \cite{AP06,CP08}. 

Deterministic gathering in graphs with agents equipped with tokens used to mark nodes was considered, e.g., in~\cite{KKSS}. Deterministic gathering of two agents with unique labels was discussed in \cite{DFKP,KM,TSZ}.
These papers considered the time of gathering in arbitrary graphs. 
 In \cite{CKP,FP} the optimization criterion for gathering was the memory size of the agents:
it was studied in \cite{FP} for trees and in  \cite{CKP} for general graphs.
Memory needed for randomized gathering in the ring was discussed, e.g., in~\cite{KKPM08}. 

Apart from the synchronous model used in this paper, several authors considered asynchronous gathering in the plane \cite{BBDDP,CFPS,FPSW} and in networks 
\cite{BCGIL,CLP,DGKKP,DPV,GP}. In \cite{CFPS,FPSW} agents were anonymous, but were assumed to have total or restricted capability of seeing other agents.  However,  in the latter scenario only connected initial configurations were discussed. In \cite{BBDDP}, a related task of approach of two agents at distance 1 was investigated and agents were assumed to have distinct integer labels, which enabled them to break symmetry. 

Computational tasks in anonymous networks were studied in the literature, starting with the seminal paper \cite{A},
followed, e.g., by \cite{ASW, BV,KKV}. While the considered tasks, such as leader election in message passing networks or computing Boolean functions, differ
from gathering studied in the present paper, the main concern is usually symmetry breaking, similarly as in our case.

Deterministic rendezvous of anonymous agents in arbitrary anonymous graphs was previously studied in \cite{CKP,DP1,GP,PY}. Papers \cite{CKP,DP1} were concerned with the synchronous scenario. The main result of \cite{CKP} was a rendezvous algorithm working for all nonsymmetric initial positions using memory logarithmic in the size of the graph. \cite{DP1} was concerned with gathering multiple anonymous agents and characterized initial positions that allow gathering with all starting times. The authors of \cite{GP} characterized initial positions that allow asynchronous rendezvous. 
In a recent paper \cite{PY}, we considered the problem  of synchronous gathering of two anonymous agents in a graph. Similarly as in the present paper, we used time to break symmetry between the agents, but the situation was very different from our present setting. While
in \cite{PY} we showed a universal algorithm that gathers all initial configurations that can be gathered by a dedicated algorithm, in the present paper we show that such an algorithm for gathering in the plane cannot exist.  

\section{Preliminaries}

In the description of our algorithms we will use the notion of {\em generalized approach} (abbreviated as GA) of agents $a_1,\dots, a_k$. This is an event happening at some time $t$ such that immediately before this time (i.e., in the time interval $[t',t)$ for some $t'<t$), for  each pair of agents $a_i \neq a_j$, for $i,j=1,\dots, k$, these agents are either at distance larger that $\epsilon$ from each other, or at least one of them has not yet appeared at its initial position, and at time $t$, the agents  $a_1,\dots, a_k$ form a connected graph, where two agents are adjacent if they are at distance at most $\epsilon$. The simplest example of a GA is the approach of two agents that get at distance $\epsilon$. Another case of a GA is when there are two agents, one of which appears in the plane at time $t$ at distance smaller than $\epsilon$ from the other agent.  However,  more complicated cases of GA are also possible. Consider two agents $a_1$ and $a_2$ inert in points $p_1$ and $p_2$ at distance $2\epsilon$, and an agent $a_3$ travelling on the line perpendicular to the segment $p_1p_2$ crossing it in the middle $c$ of this segment. When the agent $a_3$ gets to the point $c$, a GA between $a_1$,  $a_2$ and $a_3$ occurs.
Notice that  adjacent agents in the  graph resulting from a GA can see each other and can exchange information. Hence full information exchange (gossiping) can be performed in this graph and all agents participating in the GA get to know the current positions and the trajectories of all other agents in the GA. According to our model, this information exchange is performed instantaneously.

Recall that each agent $x$ has its original system $\Sigma(x)$ of coordinates, centered at its initial position.
When an agent $a$ tells agent $b$ about the initial position $p$ of agent $c$ that it learned previously, it gives to agent $b$ the coordinates
of $p$ in the system of coordinates centered at the current position of $a$. The agent $b$, knowing the current position of $a$ in the system $\Sigma(b)$ (this is what it means that agent $b$ ``sees'' $a$)  translates these coordinates to $\Sigma(b)$. In this way, every agent $b$ can give a permanent label to each agent $c$ that it learned about, this label being the coordinates of $c$ in $\Sigma(b)$. All these labels are different.  

For any points $p,q$ in the plane we denote by $dist(p,q)$ the Euclidean distance between them.
Consider two points $p_1$ and $p_2$, such that $p_1$ has coordinates  $(x_1,y_1)$ and $p_2$ has coordinates  $(x_2,y_2)$ in some common system of coordinates. We will say that point $p_1$ is larger than $p_2$ (noted $p_1>p_2$) if  either $x_1>x_2$ or ($x_1=x_2$ and $y_1>y_2$).
We will say that an agent with starting position $p_1$ is larger than an agent with starting position $p_2$ if $p_1>p_2$. Similarly, we define an order between two vectors $v$ and $w$ as the lexicographic order on their coordinates. Notice that the notion of a point or a vector being larger  than another does not depend on the common system of coordinates used to define this relation, and hence all agents agree on this notion.

\section{Characterization of gatherable configurations}

In this section we characterize all gatherable initial configurations. The characterization is given by the following theorem.

\begin{theorem}\label{char}
An  initial configuration $\{(p_1,t_1), (p_2,t_2),\dots,(p_n,t_n)\}$ is gatherable if and only if $|t_i-t_j| \geq  dist(p_i,p_j)-\epsilon$, for some pair $i \neq j$ of agents.
\end{theorem}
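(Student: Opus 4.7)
\medskip
\noindent
\textbf{Proof plan.} I will treat the two implications of the biconditional separately.

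\medskip
\noindent
\emph{Necessity.} My plan is to assume that $|t_i - t_j| < dist(p_i,p_j) - \epsilon$ for every pair $i\neq j$ and exhibit an obstruction that defeats every algorithm. The first step is to note that at the instant when any agent $k$ appears, every other agent $l$ is farther than $\epsilon$ from $p_k$: an agent $l$ with $t_l > t_k$ has not yet appeared, while one with $t_l \leq t_k$ has, at speed at most $1$, moved by at most $t_k - t_l$ from $p_l$, so the triangle inequality leaves it at distance at least $dist(p_k,p_l) - (t_k - t_l) > \epsilon$ from $p_k$. Hence no agent experiences an approach at its startup. While no approach has occurred, all agents execute the same deterministic algorithm on identical local input, producing a common $1$-Lipschitz local trajectory $f(\cdot)$, so agent $k$ occupies the global point $p_k + f(\tau - t_k)$ at every time $\tau \geq t_k$. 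Applying the triangle inequality once more then gives a pairwise distance of at least $dist(p_k,p_l) - |t_k - t_l| > \epsilon$ at all times. By induction on the (putative) ordered sequence of events, no approach ever occurs, so no two agents ever come within $\epsilon$, and gathering is impossible.

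\medskip
\noindent
\emph{Sufficiency.} Here I would design an algorithm dedicated to the configuration. Assume without loss of generality $t_i \leq t_j$ and set $\delta := t_j - t_i$, $d := dist(p_i,p_j)$, and let $v$ be the unit vector from $p_i$ to $p_j$; thanks to the common compass and unit of distance, $\delta$, $d$, and $v$ are expressible identically in every agent's local coordinate system from the (globally known) configuration data. Phase~1 instructs every agent, on startup, to move in direction $v$ at speed $1$ for distance $\min(\delta,d)$ and then stop. The hypothesis $\delta \geq d - \epsilon$ ensures an approach of $i$ and $j$ at global time $t_j$: at that instant agent $i$ sits at $p_i + \min(\delta,d) v$, within $\max(d - \delta, 0) \leq \epsilon$ of $p_j$, where agent $j$ appears. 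In Phase~2, every agent that has taken part in an approach matches its local clock and the observed nearby positions against the configuration to identify itself and its companions, and then joins a common subroutine; any agent that has not yet been approached simply waits at its Phase-1 terminus $p_k + \min(\delta,d) v$, a location also deducible from the configuration. The subroutine tours these known waiting positions and funnels all agents to a rendezvous point derived deterministically from the configuration, for instance the site of the first $(i,j)$-approach.

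\medskip
\noindent
\emph{Main obstacle.} I expect Phase~2 to be the hard part. During Phase~1 additional ``spurious'' approaches between other pairs $(k,l)$ may happen, so each merged sub-group must independently arrive at the same rendezvous and the tour must eventually sweep every remaining waiting agent or sub-group. My idea is to let all coordination data -- label assignment inside a merged group, choice of rendezvous point, order in which remaining positions are visited -- be computed deterministically from the globally shared configuration, which each merged sub-group acquires in full upon its first approach; this guarantees consistency across independent meetings. The degenerate cases $d \leq \epsilon$ (agents visible at startup) and $\delta = 0$ (which by the hypothesis forces $d \leq \epsilon$) are absorbed as trivial subcases.
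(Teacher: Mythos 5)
Your \emph{necessity} argument is correct and is essentially the paper's: before any approach all agents trace the same $1$-Lipschitz local trajectory, so the pairwise distance never drops below $dist(p_k,p_l)-|t_k-t_l|>\epsilon$; the paper phrases this as ``the later agent's route is a shift of the earlier agent's route'' and bounds how far the earlier agent can travel in the time offset $\delta<d-\epsilon$. Your Phase~1 of the sufficiency direction is also sound and close to the paper's first step (the paper has every agent traverse the full vector $\overrightarrow{p_ip_j}$ and backtrack; you move $\min(\delta,d)$ along it and stop -- either way one approach is forced).

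The genuine gap is in your Phase~2, and it sits exactly where you flag ``the main obstacle.'' Two claims there do not hold. First, you assert that after an approach an agent can ``match its local clock and the observed nearby positions against the configuration to identify itself and its companions.'' This is false in general: a configuration may contain translationally congruent sub-patterns with identical starting times (e.g., two far-apart pairs with $p_2-p_1=p_4-p_3$ and $t_1=t_3$, $t_2=t_4$, each pair within distance $\epsilon$), and then the agents in one pair cannot tell which pair they are. Second, your tour visits ``known waiting positions'' $p_k+\min(\delta,d)v$; these are known only in the global coordinate system, and an agent that cannot self-localize cannot translate them into its own frame, so the tour is not executable as stated. Consistency of the rendezvous point across independently merged sub-groups fails for the same reason: ``the site of the first $(i,j)$-approach'' or ``the largest agent's position'' is an absolute location that an unlocalized agent cannot reach. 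The paper's Algorithm \emph{DEDICATED} is built precisely to dodge this: the elected \texttt{active} agent does not tour absolute positions but repeatedly traverses, from wherever it currently is, every displacement vector $\overrightarrow{p_ip_j}$ in round-robin, and relays initial positions (always communicated as coordinates relative to the speaker and re-expressed by the listener in its own frame) until every agent knows all $n$ initial positions in its own coordinate system -- only then is self-identification unambiguous and the common destination well defined. You would need to replace your position-based tour and self-identification step with such a relative-vector tour plus an information-relay argument (the paper's Lemma~3.3) for the sufficiency direction to go through.
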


The following lemma proves the ``only if'' implication of the equivalence from Theorem \ref{char}.

\begin{lemma}\label{neg}
An  initial configuration $\{(p_1,t_1), (p_2,t_2),\dots,(p_n,t_n)\}$ is not gatherable if $|t_i-t_j| <  dist(p_i,p_j)-\epsilon$, for all pairs $i \neq j$ of agents.
\end{lemma}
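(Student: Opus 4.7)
The plan is to show that under the pairwise hypothesis, no generalized approach can ever take place, so every agent forever executes the same deterministic algorithm on empty input; this precludes gathering because the agents start at distinct points.

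First I will observe that as long as no GA has occurred, every agent has received no external input since its appearance: its entire history is ``I appeared at my origin at my local time $0$ and have seen nobody since''. Because all agents execute the same deterministic algorithm $A$, and agents share a common orientation via their compasses, there is a single function $g:[0,\infty)\to \mathbb{R}^2$, satisfying the speed bound $|g(\tau)-g(\tau')|\leq |\tau-\tau'|$, such that agent $i$ occupies the global position $p_i+g(t-t_i)$ at every global time $t\geq t_i$ at which no GA involving $i$ has yet occurred. Crucially, $g$ does not depend on $i$.

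Next I will rule out any pair $(i,j)$ ever being within distance $\epsilon$ of each other before any GA takes place. Assume WLOG $t_i\leq t_j$. For any $t\geq t_j$, the triangle inequality gives
\[
dist\bigl(p_i+g(t-t_i),\; p_j+g(t-t_j)\bigr)\;\geq\; dist(p_i,p_j)\;-\;|g(t-t_i)-g(t-t_j)|\;\geq\; dist(p_i,p_j)\;-\;(t_j-t_i),
\]
and the hypothesis $|t_i-t_j|<dist(p_i,p_j)-\epsilon$ makes this strictly larger than $\epsilon$. The borderline ``appearance'' case, where $j$ pops into the plane at time $t_j$ within distance $\epsilon$ of agent $i$'s current position, is ruled out by the same estimate applied at $t=t_j$. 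Since any GA, by its very definition, requires two of its participants to sit at mutual distance at most $\epsilon$, no GA ever forms, and the description of agent $i$'s trajectory as $p_i+g(t-t_i)$ is valid for all time.

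Finally I conclude: every agent runs $A$ on empty input forever, so agent $i$ occupies $p_i+g(t-t_i)$ for all $t\geq t_i$. If the algorithm eventually halts the agents, meaning $g(\tau)$ is constant equal to some vector $q$ for all sufficiently large $\tau$, then agent $i$'s final resting position is $p_i+q$; these points are pairwise distinct because the $p_i$ are. If $g$ never stabilizes, then no agent ever stops. Either way, the agents fail to gather. The only subtle step in this plan is the treatment of multi-agent GAs: one must explicitly invoke the fact that the defining condition of a GA forces at least one adjacent pair of participants at mutual distance at most $\epsilon$, so the pairwise bound above kills every GA, not merely those involving exactly two agents.
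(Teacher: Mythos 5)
Your proposal is correct and follows essentially the same route as the paper's proof: before any interaction, every agent traces the same deterministic trajectory shifted by its starting point and delayed by its starting time, so the unit speed bound forces any two agents to remain at distance at least $dist(p_i,p_j)-|t_i-t_j|>\epsilon$ apart, precluding any approach and hence gathering. Your version merely packages this as a uniform lower bound for all times (and explicitly treats sudden appearances and multi-agent GAs) rather than deriving a contradiction at the earliest approach time, but the underlying argument is identical.
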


\begin{proof}
We prove the lemma by contradiction. Consider an initial configuration $\{(p_1,t_1)$,$\dots$,$(p_n,t_n)\}$ such that $\forall$ $i \neq j$, $|t_i-t_j| < dist(p_i,p_j)-\epsilon$. 
Suppose that, using some algorithm $\cal A$, gathering can be accomplished for this configuration. In order to get a contradiction, it is enough to show that no two agents $i\neq j$ can approach using algorithm $\cal A$. Suppose that some agents can approach and let $t$ be the earliest time of any approach. Let $i$ and $j$ be the agents that approach at time $t$. 
Without loss of generality, let $j$ be the later of them.  Let $\delta=t_j-t_i$. 

Since agent $i$ starts $\delta$ time ahead of agent $j$ and the agents are executing the same deterministic gathering algorithm $\cal A$, the route traversed by agent $j$ until approach is a shift of the route of agent $i$ until time $t-\delta$. 
So, the distance between the position of agent $i$ at time $t-\delta$ and the position of agent $j$ at time $t$ is $dist(p_i,p_j)$. Call this distance $d$. To achieve approach, the agent $i$ should cover distance at least $d-\epsilon$ during the last time segment of length $\delta$ before approach. However, by our assumption we have $\delta < d-\epsilon$, which gives a contradiction.
\end{proof}

In order to prove the ``if'' implication of the equivalence from Theorem \ref{char}, we design a gathering algorithm dedicated to a particular configuration satisfying the condition from the theorem.
Consider an initial configuration $\{(p_1,t_1), (p_2,t_2),\dots,(p_n,t_n)\}$ such that $|t_i-t_j| \geq  dist(p_i,p_j)-\epsilon$, for some pair $i \neq j$ of agents. This configuration remains fixed in the rest of this section and is given as input to the algorithm.

\vspace*{0.5cm}
\noindent
{\bf Algorithm} $DEDICATED$ 

The high-level idea of the algorithm is the following. In the first part, all agents make two moves that guarantee at least one GA. In the second part, the agents behave as follows.  If an agent  did not participate in a GA during the first part, it freezes and waits until it learns the initial positions of all other agents.  If an agent participated in a GA during the first part, an election mechanism is launched. As a result of it,  the agent either behaves as above, i.e., as if it did not participate in a GA, or becomes {\tt active} and starts visiting all possible initial positions of other agents in a round robin fashion (recall  that the agents know the initial configuration but they do not know their own position in it), at each GA exchanging all known information about the initial positions of other agents. Once an agent learns the initial positions of all other agents, it goes to the initial position of the largest agent and stops.

We now give a detailed description of the algorithm. At each time each agent is in one of the three states {\tt beginner}, {\tt active} or {\tt passive}. Below we describe the actions of an agent $A$ in each of these states. Let $V$ be the sequence of all vectors $(\overrightarrow{p_ip_j}: i,j \leq n, i\neq j)$ listed in increasing order.

\vspace*{0.5cm}
\noindent
State {\tt beginner}

Agent $A$ starts in this state. Let $v$ be the largest vector $\overrightarrow{p_ip_j}$ in $V$, such that $ |t_i-t_j| \geq  dist(p_i,p_j)-\epsilon$. Agent $A$ moves along the vector $v$ and backtracks to its initial position. If during these moves it did not participate in a GA, it transits to state {\tt passive}. If during these moves it participated in a GA, agent $A$ transits to state {\tt active} if it is the largest of all agents participating in its first GA, otherwise it transits to state {\tt passive}. 

\vspace*{0.5cm}
\noindent
State {\tt passive}

Agent $A$ waits at its initial position and during each GA it exchanges all available information about the initial positions of agents, until it learns the initial positions of all other agents. Then it goes
to the initial position of the largest agent and stops.

\vspace*{0.5cm}
\noindent
State {\tt active}

For consecutive vectors $w$ from the sequence $V$, agent $A$ goes along vector $w$ and backtracks to its initial position. After finishing the sequence $V$ it starts over again. During each GA
 it exchanges all available information about the initial positions of agents, and it travels along vectors from $V$ until it learns the initial positions of all other agents. Then it makes one more round going along all vectors of $V$ and informing all encountered {\tt passive} agents about the initial positions of all other agents. Finally, it goes
to the initial position of the largest agent and stops.

The proof of the correctness of Algorithm $DEDICATED$  is split into three lemmas.

\begin{lemma}\label{GA}
By the time when all agents make the first two moves along the vector $v$ and back, a GA must occur.
\end{lemma}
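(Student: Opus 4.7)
The plan is to track the pair $(i,j)$ of agents whose starting positions define $v$: by the algorithm's rule, $v=\overrightarrow{p_ip_j}$ and $|t_i-t_j|\ge|v|-\epsilon$. I would first exploit the freedom in the orientation of $v$---both $\overrightarrow{p_ip_j}$ and $\overrightarrow{p_jp_i}$ belong to $V$ and both satisfy the algorithm's time-gap condition---to arrange that the tail of $v$ sits at the later of the two agents, i.e.\ that $t_i\ge t_j$. Setting $\delta:=t_i-t_j\ge 0$, the hypothesis becomes $\delta\ge|v|-\epsilon$. Crucially, under this orientation, agent $i$'s forward move along $v$ brings it exactly to $p_i+v=p_j$ at the critical time $\tau^{\ast}:=t_i+|v|$, and this instant is the pivot of the whole argument.

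Next I would evaluate where agent $j$ sits at $\tau^{\ast}$. Agent $j$ also moves along $v$ and then backtracks to $p_j$, so at $\tau^{\ast}$ it has been in motion for $\tau^{\ast}-t_j=\delta+|v|$ time units. Because both agents move on the same line through $p_i$ and $p_j$, a short one-dimensional computation then yields a clean case split. If $0\le\delta\le|v|$, agent $j$ is in its backtracking phase and lies at $p_j+(|v|-\delta)\hat{v}$, so the distance between agents $i$ and $j$ at $\tau^{\ast}$ equals $|v|-\delta$; invoking $\delta\ge|v|-\epsilon$ gives $|v|-\delta\le\epsilon$, which produces an approach (hence a GA) at or before $\tau^{\ast}$. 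If instead $\delta>|v|$, then $\tau^{\ast}>t_j+2|v|$, so agent $j$ has already finished its two beginner moves by $\tau^{\ast}$: if it participated in any GA during them, the lemma is proved a fortiori; otherwise, by the algorithm, agent $j$ has just transitioned to state {\tt passive} and remains at $p_j$, and agent $i$'s continuous approach to $p_j$ drives their mutual distance below $\epsilon$ at some moment strictly before $\tau^{\ast}$, again forcing a GA.

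Either way, a GA occurs no later than $\tau^{\ast}=t_i+|v|<t_i+2|v|$, which lies within the two-move window of every agent, establishing the claim.

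The step I expect to be the main obstacle is precisely the initial orientation adjustment. One has to verify that the algorithm's choice of $v$ as the lex-largest vector in $V$ satisfying $|t_i-t_j|\ge dist(p_i,p_j)-\epsilon$ can always be aligned with the labelling convention ``tail of $v$ at the later agent''---and, if the lex-largest orientation points the other way, that the symmetric analysis obtained by replacing $\tau^{\ast}=t_i+|v|$ by $\tau^{\ast}=t_j+|v|$ (and interchanging the roles of $i$ and $j$) still applies, because $\overrightarrow{p_jp_i}\in V$ satisfies the same time-gap condition. Once this compatibility check is in place, the rest reduces to the one-dimensional trajectory computation sketched above.
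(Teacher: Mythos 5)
Your central computation is correct and is, in substance, the paper's own argument for one of its two cases: the pivot instant $\tau^{\ast}=t_i+|v|$ at which the later agent $i$ arrives at $p_i+v=p_j$, the location $p_j+(|v|-\delta)\hat{v}$ of the earlier agent at that instant (or its being parked at $p_j$ when $\delta>|v|$), and the bound $|v|-\delta\le\epsilon$ coming from the hypothesis $\delta\ge|v|-\epsilon$. This is exactly how the paper treats the case where the tail of $v$ is at the later agent's starting point.

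The genuine gap is the step you defer as an ``orientation adjustment''. The algorithm fixes $v$ as the lexicographically largest qualifying vector; you have no freedom to re-orient it, and the case where the tail of $v$ sits at the \emph{earlier} agent is not the mirror image of your main case. The symmetry fails because every agent traverses the \emph{same} vector $v$: if $v=\overrightarrow{p_ip_j}$ with $t_i<t_j$, then the later agent $j$ moves from $p_j$ to $p_j+v=2p_j-p_i$, i.e., \emph{away} from $p_i$, so at your proposed pivot $t_j+|v|$ agent $j$ is at $2p_j-p_i$ rather than at $p_i$, and the distance computation collapses; the fact that $\overrightarrow{p_jp_i}$ also lies in $V$ and satisfies the time-gap condition is irrelevant, since the agents do not walk along that vector. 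The paper handles this orientation by a different argument: after disposing of the trivial case $dist(p_i,p_j)\le\epsilon$, it notes that the earlier agent $i$, travelling along $v$, is at distance $\epsilon$ from $p_j$ at global time $t_i+(|v|-\epsilon)\le t_j$, when agent $j$ has not yet left its initial position, and concludes that an approach occurs there. Any correct completion of your proof needs an argument of this second kind for the ``wrong'' orientation (and note that even then one must be careful that the later agent is actually present near $p_j$ at a moment when agent $i$ is still within distance $\epsilon$ of it, which is where this case becomes delicate rather than a formality).
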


\begin{proof}
Consider two agents, $i$ and $j$, such that $v=\overrightarrow{p_ip_j}$. In state {\tt beginner} all agents move along the vector $v$ and back. If $dist(p_i,p_j) \leq \epsilon$, then these
two agents will wait at their initial positions after these moves, if they did not approach before, and the approach happens at the latest when they are back at their initial positions. Hence we may assume that
 $dist(p_i,p_j) > \epsilon$. Hence $|t_i-t_j|>0$. Consider two cases. If $t_i < t_j$, then by the time when agent $i$ travelling along the vector $\overrightarrow{p_ip_j}$ gets at distance $\epsilon$ from $p_j$, agent $j$ is still at its initial position $p_j$, because $t_j-t_i \geq  dist(p_i,p_j)-\epsilon$, and an approach between them happens. If $t_i>t_j$, let $\delta=t_i-t_j$. In this case, agent $j$ starts earlier. At time $dist(p_i,p_j)+\delta$ from its start, agent $j$ went from $p_j$ along the vector $\overrightarrow{p_ip_j}$ and either made the move back to $p_j$ (if $\delta\geq dist(p_i,p_j)$ ), or made a partial move towards $p_j$   getting at distance $dist(p_i,p_j)-\delta$
 (if $\delta< dist(p_i,p_j)$ ). At time $dist(p_i,p_j)+\delta$ from the start of agent $j$, agent $i$ made its move along the vector $\overrightarrow{p_ip_j}$, getting to $p_j$. In both cases, agents $i$ and $j$ are at distance at most $\epsilon$ at this time because $ t_i-t_j \geq  dist(p_i,p_j)-\epsilon$. Hence an approach between them happens.
\end{proof}

\begin{lemma}\label{know}
At some time of the execution of Algorithm $DEDICATED$ every agent knows the initial positions of all agents.
\end{lemma}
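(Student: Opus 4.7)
I would break the argument into three stages: (a) show that the beginner phase produces at least one {\tt active} and at least one {\tt passive} agent; (b) prove a geometric fact about the round-trips of an active agent; (c) propagate information via passive relays and use the algorithm's extra round to inform passive agents.

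For (a), Lemma~\ref{GA} guarantees that at least one GA occurs during the beginner moves; I would take the earliest such GA $G^*$ and its largest participant $L$. Since $G^*$ is the earliest GA overall, it is also $L$'s first GA, and $L$ is its largest participant, so $L$ enters {\tt active}. For the passive side, I would observe that the overall smallest agent $S$ is not the largest in any GA it participates in (any other participant is larger in the adopted order), and if it participates in no GA, the default rule of the beginner state puts it into {\tt passive}; in either case $S$ is passive.

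For (b), I would record the key geometric observation: for each active agent $A$ with starting position $p_a$ and each $j\neq a$, the vector $\overrightarrow{p_ap_j}$ lies in $V$, and when $A$ traverses it from $p_a$ it arrives exactly at $p_j$. Hence in every full round through $V$, $A$ visits every other initial position; since passive agents remain forever at their initial positions, $A$ has a GA with every passive agent in every round, and during each GA they exchange all information accumulated so far.

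For (c), after $A$'s first round every passive agent knows $A$'s starting position, and applying this observation to every active agent, for each passive agent $P$ there is a finite time $T_P$ after which $P$ has been visited by every active agent and therefore knows the starting position of every active agent. Because $A$ re-visits every passive agent in every subsequent round, in some later round $A$ arrives at a passive agent after time $T_P$, and from the corresponding GA learns the starting positions of the remaining active agents; combined with (b), $A$ now knows all initial positions. Finally, the algorithm prescribes one extra round for $A$ after full knowledge is attained, during which every passive agent is visited and receives the complete list, so all agents end up with full knowledge. The hard part will be this third stage: one must check that the unsynchronized schedules of different active agents do not prevent some active agent from ever reaching a passive relay after the relay has been touched by all other active agents. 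This is handled by observing that every active agent executes rounds of the same finite length $\sum_{w\in V}2|w|$ and starting-time differences are bounded, so the required delay is bounded.
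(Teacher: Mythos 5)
Your decomposition mirrors the paper's own proof: the beginner phase yields at least one {\tt active} and at least one {\tt passive} agent, each {\tt active} agent lands on every other initial position once per round through $V$, and {\tt passive} agents act as relays, with the extra final round completing the dissemination. Steps (a) and (b) are correct and match the paper.

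Step (c), however, rests on a false premise: you assert that {\tt passive} agents ``remain forever at their initial positions.'' In Algorithm $DEDICATED$ a {\tt passive} agent waits only until it has learned the initial positions of \emph{all} agents, and then it leaves for the initial position $q$ of the largest agent and stops there. Hence an {\tt active} agent $A$ may arrive at the initial position of a relay $P$ after $P$ has already departed; no GA occurs there and $A$ receives nothing. This is not a vacuous worry: with two {\tt active} agents whose rounds are offset in time, every {\tt passive} agent can acquire full knowledge through the \emph{other} {\tt active} agent and depart before $A$ has extracted from any of them the position of some agent $A$ has never met directly (for instance, a {\tt passive} agent that learned $A$'s position only via a relay chain and left before $A$'s first visit to it). Your argument then leaves $A$ with no mechanism to complete its knowledge, so $A$ never performs its final round. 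The paper closes exactly this hole with one additional observation: $q$ is itself one of the initial positions that $A$ visits in every round, so $A$ eventually meets the departed {\tt passive} agents (which by then know everything) at $q$ and learns the missing positions there. Without this observation, or some substitute for it, the claim that every {\tt active} agent eventually knows all initial positions is not established.
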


\begin{proof}
By Lemma \ref{GA}, a GA occurs at some time of the execution of Algorithm $DEDICATED$. 
One of the participants of a GA transits to state {\tt active} and all others transit to state {\tt passive}.
After the first two moves along the vector $v$ and back, every agent in state {\tt beginner} transits either to state {\tt passive} or to state {\tt active}. Hence, at some time $t$ of the execution, there are no agents in state {\tt beginner} and there is at least one agent in state {\tt active}
and at least one agent in state {\tt passive}. After time $t$,  any {\tt active} agent visits periodically every {\tt passive} agent until 
the latter learns the initial positions of all agents. Suppose that no {\tt passive} agent has learned initial positions of all agents yet. 
Pick any {\tt passive} agent $P$. All {\tt active} agents visit $P$ because $P$ does not move before learning all initial positions. Hence $P$ will learn the initial positions of all {\tt active} agents (either directly from the agent, or indirectly because another {\tt passive} agent $P'$  was visited by an agent $A'$, learned its initial position and then an {\tt active} agent $A$ visited $P'$ subsequently, learned the initial position of $A'$ from it and informed $P$ about it at the next visit). Also, an {\tt active} agent must visit any other
{\tt passive} agent $P'$, for the same reason. Hence, at the next visit at $P$ an {\tt active} agent will inform $P$ of the initial position of $P'$.
The last round of visits of an {\tt active} agent guarantees that an active agent will still visit all {\tt passive} agents after learning the initial positions of all of them, hence any {\tt passive} agent will learn the initial position of any other {\tt passive} agent.
It follows that $P$ will eventually learn the initial positions of all other agents and will go to the initial position $q$ of the largest agent.
Some {\tt active} agent $A$ may not find $P$ at its initial position because $P$ already started its travel to $q$. However, agent $A$ will eventually
visit $q$ after $P$ got there, and then it will learn the initial positions of all other agents, if it did not know them before.
\end{proof}

\begin{lemma}\label{pos}
Algorithm $DEDICATED$ gathers an initial configuration $\{(p_1,t_1), (p_2,t_2),\dots,(p_n,t_n)\}$ if  $|t_i-t_j| \geq  dist(p_i,p_j)-\epsilon$, for some pair $i \neq j$ of agents.
\end{lemma}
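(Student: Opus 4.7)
The plan is to combine Lemma \ref{GA} and Lemma \ref{know} with the global ordering set up in the Preliminaries. Lemma \ref{GA} guarantees that some GA occurs during the execution, so that (by the rule for transitioning out of state {\tt beginner}) at some time there is at least one {\tt active} agent and every remaining agent is {\tt passive}. Lemma \ref{know} then guarantees that every agent eventually learns the initial positions of all $n$ agents.

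Next I would argue that all agents agree on the destination point. Recall from the Preliminaries that the order on points is the lexicographic order on coordinates and, because agents share a compass and this order depends only on coordinate differences, it is invariant under translation; hence every agent, once it knows the full set $\{p_1,\dots,p_n\}$, identifies the same maximum element $q = p_k$ among these points. According to the descriptions of the {\tt passive} and {\tt active} states, once an agent knows all initial positions it travels directly to $q$ and stops forever. Since all agents agree on $q$ and each reaches it in finite time, the team is gathered at $q$, which is exactly the requirement of the gathering task.

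The one subtlety I would check explicitly concerns a possible race condition. A {\tt passive} agent $P$ might leave its initial position (heading for $q$) before some {\tt active} agent $A$ has completed its tour, raising the worry that $A$ could miss $P$ and never learn $p_P$, so that $A$ never terminates. However, by the argument already made in the proof of Lemma \ref{know}, every {\tt active} agent $A$ eventually learns all initial positions (either by meeting each {\tt passive} agent at its starting point, or indirectly through another {\tt active} agent's report after a subsequent visit, or by encountering a {\tt passive} agent already parked at $q$). So $A$ will also reach $q$ in finite time, and then stop by the specification of state {\tt active}.

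Combining these observations, every agent eventually arrives at the common point $q$ and remains there forever; this is the definition of gathering, proving the lemma. The main obstacle, and the only delicate part, is the coordination argument in the previous paragraph; the rest of the proof is a direct bookkeeping of the state transitions and an appeal to the translation invariance of the lexicographic order on points.
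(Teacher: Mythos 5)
Your proof is correct and follows essentially the same route as the paper: the paper's proof of Lemma \ref{pos} simply invokes Lemma \ref{know} and concludes that every agent travels to the initial position of the largest agent and stops. The additional points you verify (translation invariance of the ordering, and the race condition where an {\tt active} agent misses a departed {\tt passive} agent) are already settled in the Preliminaries and in the proof of Lemma \ref{know}, respectively, so they are sound but not new.
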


\begin{proof}
By Lemma \ref{know}, every agent eventually learns the initial positions of all other agents, if the condition ``$|t_i-t_j| \geq  dist(p_i,p_j)-\epsilon$, for some pair $i \neq j$ of agents'' is satisfied. Then every agent goes to the same initial position
$q$ of the largest agent and stops. This implies gathering of the initial configuration $\{(p_1,t_1), (p_2,t_2),\dots,(p_n,t_n)\}$.
\end{proof}

Now Theorem \ref{char} is a direct consequence of Lemmas \ref{neg} and \ref{pos}.

\section{Universal gathering}

Having characterized all gatherable initial configurations, we now study our second main problem: Does there exist a universal deterministic algorithm gathering all of them? It turns out that the answer to this question is negative. We solve this problem by classifying all gatherable initial configurations into the following two categories. {\em Bad} configurations are those gatherable initial configurations
$\{(p_1,t_1), (p_2,t_2),\dots,(p_n,t_n)\}$ for which $|t_i-t_j| \leq  dist(p_i,p_j)-\epsilon$, for all pairs $i \neq j$ of agents. {\em Good} configurations are all other gatherable initial configurations, i.e., those initial configurations 
$\{(p_1,t_1), (p_2,t_2),\dots,(p_n,t_n)\}$ for which $|t_i-t_j| >  dist(p_i,p_j)-\epsilon$, for some pair $i \neq j$ of agents. We will show that there is no universal deterministic gathering algorithm for the class of all bad configurations of a given size, but there is such a universal algorithm for the class of all good gatherable configurations of a given size.

The following proposition implies that there is no universal algorithm gathering all bad gatherable configurations of a given size. It even shows a stronger result, that achieving approach is impossible for this class of configurations.

\begin{proposition}\label{bad}
There does not exist a deterministic algorithm that guarantees an approach for all bad gatherable configurations of size 2.
\end{proposition}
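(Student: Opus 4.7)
The plan is a proof by contradiction that exploits the extreme rigidity of bad gatherable size-$2$ configurations: when $|t_2-t_1|=dist(p_1,p_2)-\epsilon$, an approach is possible only if the earlier agent moves at full unit speed in one very specific direction for one very specific duration. Since any algorithm cannot anticipate the adversary's direction of choice, the adversary will always be able to pick a direction along which the algorithm fails to produce the required straight-line motion.

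First, I would fix an arbitrary deterministic algorithm $\mathcal{A}$, any $\epsilon>0$, and consider a single agent running $\mathcal{A}$ alone from the origin at time $0$; call its trajectory $r:[0,\infty)\to\mathbb{R}^2$. Since the model's instructions are \emph{go in direction $\vec{d}$ at distance $x$} and \emph{stay put for time $t$}, $r$ is a countable concatenation of unit-speed straight-line segments separated by stationary intervals. Next, I would place a second agent at $p_2 \neq 0$ with $t_2=\delta>0$; by the compass hypothesis and determinism, before any approach has occurred agent $2$'s trajectory is a space-time shift of agent $1$'s, so at global time $t_2+s$ the two agents sit at $r(s+\delta)$ and $p_2+r(s)$, at distance $\|r(s+\delta)-r(s)-p_2\|$. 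For a bad gatherable configuration we have $\|p_2\|=\delta+\epsilon$, and writing $\vec{w}=p_2/\|p_2\|$, the triangle inequality combined with $\|r(s+\delta)-r(s)\|\leq\delta$ yields that this distance is at least $\epsilon$, with equality iff $r(s+\delta)-r(s)=\delta\vec{w}$. Hence an approach under $(p_2,t_2)$ occurs if and only if there exists some $s\geq 0$ with $r(s+\delta)-r(s)=\delta\vec{w}$.

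The core observation is that such an equation with $\|\vec{w}\|=1$ forces the entire interval $[s,s+\delta]$ to be contained in a single maximal unit-speed straight-line segment of $r$ in direction $\vec{w}$: any pause or change of heading on $[s,s+\delta]$ would strictly reduce the displacement norm below $\delta$. Since $r$ has only countably many such maximal segments, the set $D_\delta$ of directions realizable this way is countable. Fixing $\delta=1$ and picking any unit vector $\vec{w}_0$ in the uncountable set $\mathbb{S}^1\setminus D_1$, the configuration $\{(0,0),\,((1+\epsilon)\vec{w}_0,\,1)\}$ is a bad gatherable size-$2$ configuration for which no approach can ever occur under $\mathcal{A}$, contradicting our assumption. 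I expect the only delicate step to be rigorously justifying that $r(s+\delta)-r(s)=\delta\vec{w}$ really pins $r$ down to a single straight full-speed segment on $[s,s+\delta]$; this is essentially the elementary fact that a path with speed bounded by $1$ can realize displacement of norm $\delta$ in time $\delta$ only by a straight constant-speed motion, but it needs to be phrased cleanly within the piecewise structure induced by the instruction model. Once this is established, the counting argument and the contradiction are immediate.
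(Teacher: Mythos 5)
Your proposal is correct and follows essentially the same route as the paper: both arguments observe that a first approach in a bad size-$2$ configuration forces the earlier agent's trajectory to contain a full-speed straight segment of length $\delta$ in the exact direction of $\overrightarrow{p_1p_2}$, and then derive a contradiction from the fact that a polygonal trajectory has only countably many segment directions while the adversary has uncountably many choices of direction. Your triangle-inequality justification of the "forced straight segment" step is in fact spelled out more explicitly than in the paper, which merely asserts it.
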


\begin{proof}
Consider the problem of approach for two agents, where $\epsilon=1/2$. In order to get a contradiction, consider an algorithm $\cal A$
that guarantees approach for all bad gatherable configuration of two agents, for this $\epsilon$.
Consider the set $Z$ of initial configurations $\{(p_1,t_1),(p_2,t_2)\}$
such that $dist(p_1,p_2)=1$ and $|t_1-t_2|=1/2$. By Theorem \ref{char}, each configuration in the set $Z$ is gatherable, and by definition each of them is bad.
Hence algorithm $\cal A$ should gather all initial configurations from the set $Z$. Fix a point $x$ in the plane and, without loss of generality, suppose that the earlier agent is that starting at point $x$. All initial configurations $\{(x,0),(p,1/2)\}$, for all points $p$ at distance 1 from $x$, are in $Z$.  For any point $p$ at distance 1 from $x$, let $S(p)$ be the segment $xp'$, where $p'$ is the middle of the segment $xp$. Suppose that, for the configuration $\{(x,0),(p,1/2)\}$, the first approach accomplished by algorithm $\cal A$ is at time $t(p)$ (all times are counted from time 0). Hence, the segment between the position of the earlier agent at time  $t(p)-1/2$ and the position of the later agent at time $t(p)$ is a shift of the segment $S(p)$. In order to accomplish gathering, the earlier agent must traverse this segment during the time interval $[t(p)-1/2,t(p)]$. Hence the polygonal line $P$ resulting from the application of algorithm $\cal A$ at point $x$ must contain as segments shifts of all segments $S(p)$, where $p$ is at distance 1 from $x$. Any pair of such segments can intersect at most in one point. Suppose that these shifts in the polygonal line $P$ are in order $I_1, I_2,\dots$, 
corresponding to segments $S(p_1),S(p_2), \dots$. Since the set of points at distance 1 from $x$ is uncountable, there exists a point $p$ at distance 1 from $x$ that is different from all points $p_i$, $i=1,2,\dots$. Hence the polygonal line $P$  does not contain a shift of the segment $S(p)$, which is a contradiction.
\end{proof}

In contrast to the class of bad  gatherable configurations, we will show that all good gatherable configurations of a given size $n$ can be gathered by a universal algorithm $GATHER(n)$ whose initial knowledge is only the integer $n$.
Our algorithm will use the procedure $Star$ whose high-level idea is the following. In consecutive phases $x=1,2,\dots$,
the agent gets at distance at most $\frac{1}{x}$ from every point of the circle of radius $x$ centered at the starting position of the agent.  Each phase $x$ is executed in $k$ stages. Each stage corresponds to a ray of length $x$ from the starting position of the agent. The number $k$ of stages in a phase $x$ is $k = \lceil \frac{2\pi}{\alpha}\rceil$, where $\alpha$ is the angle between two rays of length $x$ whose endpoints are at distance $\frac{1}{x}$. The angle $\alpha$ is computed as follows:

\begin{equation}
 \sin (\alpha/2)=\dfrac{1}{2x^2}
\end{equation}
\begin{equation}
\alpha=2\cdot \arcsin\left(\frac{1}{2x^2}\right)
\end{equation}

In the first stage of phase $x$, the agent goes North at distance $x$ from its starting position, then comes back and waits time $x$ at its starting position. Next, in each subsequent stage $i=2,3,\dots,k$, the agent chooses a ray at angle $(i-1)\cdot\alpha$ clockwise from North and goes at distance $x$ along this ray, then comes back and waits time $x$ at its starting position. Algorithm \ref{star} gives a pseudocode of procedure $Star$. The procedure $Star$ is interrupted as soon as a GA occurs.

\begin{algorithm}
\Begin
{
$x:=1$\\
Repeat forever\\
\Begin
{
$\alpha : = 2 \arcsin(\frac{1}{2x^2})$ \\
$k : = \lceil \frac{2\pi}{\alpha}\rceil$\\
\For{$i=1$ to $k$}
{
Choose the half-line $H$ at angle $\alpha_i=(i-1)\cdot\alpha$ clockwise from North\\ 
Go along $H$ at distance $x$ \\
Go back at distance $x$ \\
Wait for time $x$\\
}
 }
 $x:=x+1$
 }
\caption{Procedure $Star$}
\label{star}
\end{algorithm}

Now, we describe algorithm $GATHER(n)$ that solves the problem of gathering for any good gatherable initial configuration $\{(p_1,t_1), (p_2,t_2),\dots,(p_n,t_n)\}$. 

\vspace*{0.5cm}
\noindent
{\bf Algorithm} $GATHER(n)$
 
 In order to convey the high-level idea of the algorithm, we consider four states in which an agent can be: {\tt cruiser}, {\tt explorer}, {\tt token} and {\tt shadow}. An agent in each of these states has a particular goal to accomplish and Figure $\ref{state}$ gives a diagram of transitions between the states.

An agent $A$  wakes up in state {\tt cruiser} and starts executing procedure $Star$. The main goal in this state is to achieve a GA after which agent $A$ transits to one of the other states. Agents in states {\tt token} and {\tt shadow} freeze and wait for the order to go to the final gathering position. Every {\tt explorer} is associated with a set of {\tt tokens}. It continues procedure $Star$ and becomes a {\tt shadow } when finding a {\tt token} larger than any of its own {\tt tokens}. Ultimately, only one {\tt explorer} remains: it is the one whose {\tt token} is the largest of all {\tt tokens}.
Every explorer counts the encountered agents in states {\tt token} or {\tt shadow}. When the last {\tt explorer} realizes that it has seen $n-1$ other agents (in states {\tt token} or {\tt shadow}) then it gives orders to all agents to go to the final gathering position and goes there itself. The reason for discerning {\tt tokens} from {\tt shadows} (which perform exactly the same actions) is to avoid freezing all {\tt explorers}.
Since {\tt explorers} become {\tt shadows} by comparing encountered {\tt tokens} to their own {\tt tokens}, we can guarantee that exactly one {\tt explorer} will remain to the end and give the final gathering order.

We now describe in detail the actions of an agent $A$ in each of the four states {\tt cruiser}, {\tt explorer}, {\tt token} and {\tt shadow}.

\vspace*{0.5cm}
\noindent
State {\tt cruiser}

Agent $A$ wakes up in this state and starts executing Procedure $Star$ until it participates in a GA. There are three cases.

Case 1. The GA  involves no agents in states {\tt token} or {\tt cruiser} apart  from $A$.

In this case agent $A$ simply ignores the GA.

Case 2. The GA involves other agents in state {\tt cruiser} but no agents in state {\tt token}.

After gossiping in the graph induced by the GA, the largest agent $B$ in state {\tt cruiser}  changes its state to {\tt explorer} (and continues the execution of $Star$). All other agents in state {\tt cruiser}  participating in the GA interrupt the execution of Procedure $Star$ and transit to state {\tt token}.

Case 3. The GA involves at least one agent in state {\tt token}.

Agent $A$ interrupts the execution of Procedure $Star$ and transits to state {\tt shadow}.

\vspace*{0.5cm}
\noindent
State {\tt explorer}

Agent $A$ transited to this state after a GA in which some agents in state {\tt cruiser} transited to state {\tt token}. It considers all these agents as its {\tt tokens}.
Agent $A$ keeps track of all initial positions of encountered agents in states {\tt token} and {\tt shadow}.
Agent $A$ continues the execution of Procedure $Star$ until a GA satisfying one of the two following conditions:
\begin{enumerate}
\item
The total count of all agents  in states {\tt token} or {\tt shadow} seen to date is $n-1$
\item
The total count of all agents  in states {\tt token} or {\tt shadow} seen to date is  less than $n-1$ and one of the agents in state {\tt token} in the current GA is larger than all {\tt tokens} of $A$

\end{enumerate}

In the first case agent $A$ repeats the last executed phase of procedure $Star$ and at each GA orders all participating agents to go to the
initial position of the largest of all $n$ agents and stop. Then it goes to the
initial position of the largest of all $n$ agents and stops.

In the second case, agent $A$ transits to state {\tt shadow}.

\vspace*{0.5cm}
\noindent
State {\tt token}

Agent $A$ stays at the position in which it transited to this state until receiving an order. Then it executes the order.

\vspace*{0.5cm}
\noindent
State {\tt shadow}

Agent $A$ stays at the position in which it transited to this state until receiving an order. Then it executes the order.

\begin{figure}[tp]
\centering
\includegraphics[scale=0.65]{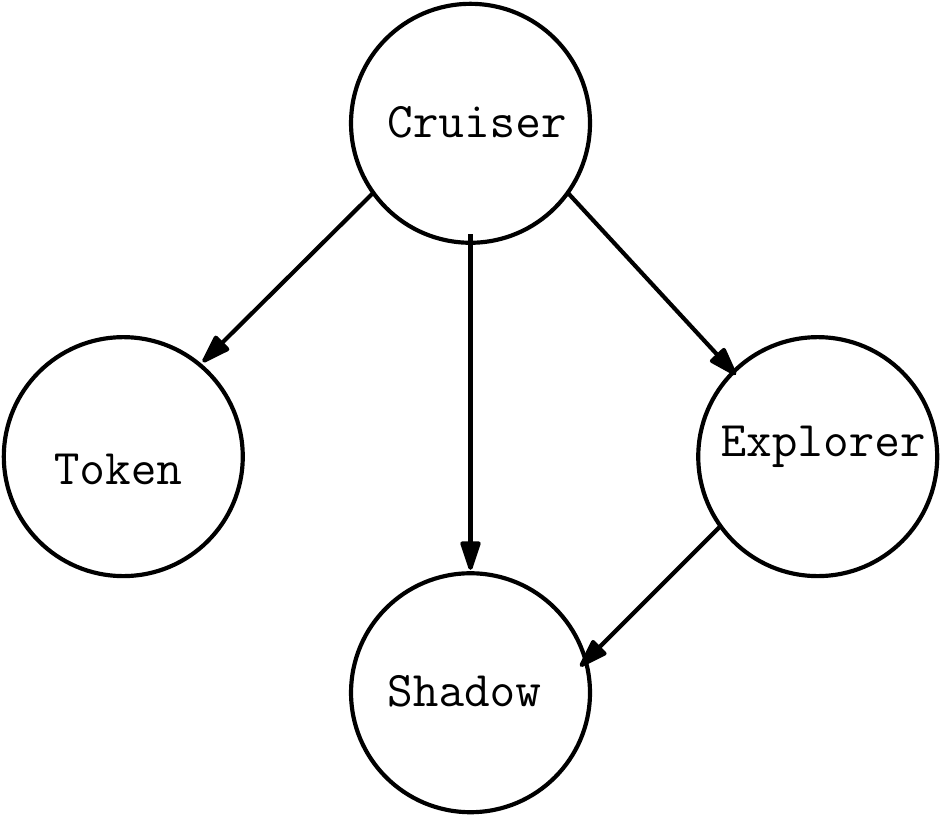}
\caption{State diagram of Algorithm $GATHER(n)$. }
\label{state}
\end{figure}


%
%
%
%
%
%
%

\vspace*{0.5cm}

The proof of correctness of Algorithm $GATHER(n)$ is split into several lemmas.
In all of them, we consider a good gatherable initial configuration $\{(p_1,t_1), (p_2,t_2),\dots,(p_n,t_n)\}$. 

%

\begin{lemma}\label{one-token}
At some time of the execution of Algorithm $GATHER(n)$,  at least one agent transits to state {\tt token}.
\end{lemma}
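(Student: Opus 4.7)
The proof would proceed by contradiction: assume no agent ever transits to state {\tt token}. A first, structural step inspects the state machine of $GATHER(n)$: every transition out of state {\tt cruiser} requires a generalized approach (GA), and by Case~2 of the {\tt cruiser} rules, a GA in which at least two cruisers participate and no tokens are present places at least one cruiser into state {\tt token}. Since every agent begins as a cruiser, the chronologically first GA would have produced at least one token; hence, if no agent ever becomes a token, no GA ever occurs during the execution, and all agents execute Procedure $Star$ forever from their wake-up times. So it suffices to show that some GA does occur.

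Exploiting goodness, I fix a pair $i\ne j$ with $|t_i-t_j|>dist(p_i,p_j)-\epsilon$ (which exists since the configuration is good); WLOG $t_i<t_j$ and write $d:=dist(p_i,p_j)$, $\delta:=t_j-t_i$, and $\beta:=\delta-(d-\epsilon)>0$. I would then use two geometric/temporal features of $Star$. First, in phase $x$ of agent $i$, the stages whose rays make angle at most $\arcsin(\epsilon/(2d))$ with $\overrightarrow{p_ip_j}$ form a block of $\Theta(\epsilon x^2/d)$ consecutive stages, and in each such stage agent $i$ enters the open disk $B(p_j,\epsilon/2)$ during a short subinterval around a well-defined ``near-$p_j$'' instant; these instants form an arithmetic progression with common difference $3x$ (the stage length). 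Second, in each stage of phase $y$ of agent $j$, the agent waits at $p_j$ for an interval of length $y$; these waiting intervals recur with period $3y$. At the real time of $i$'s phase $x$, agent $j$'s phase $y$ satisfies $y=x+O(1)$, because both cumulative $Star$ durations grow as $\Theta(x^4)$.

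The heart of the proof is then a pigeon-hole / Diophantine alignment between the two arithmetic progressions. One reduces the $\Theta(x^2)$ near-$p_j$ moments of $i$ modulo $3y$ and observes that on the circle $\mathbb{R}/(3y)\mathbb{Z}$ of circumference $3y\approx 3x$ these residues become increasingly dense as $x$ grows, so for some sufficiently large $x$ (passing if necessary to a subsequence on which $\gcd(x,y(x))$ is small, which exists because $\delta$ prevents exact synchronization for all $x$) at least one residue falls in the waiting window $[2y,3y]\bmod 3y$. At the corresponding real instant, agent $i$ lies within $\epsilon/2$ of $p_j$ while agent $j$ is exactly at $p_j$, so the two are within $\epsilon$: a GA, contradicting the assumption. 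The main obstacle is precisely this alignment step: both progressions are deterministic and have phase-dependent steps, so one has to rule out a Diophantine degeneracy in which $3x$ is a near-multiple of $3y$ for all large $x$. The strict inequality $\delta>d-\epsilon$ --- i.e., goodness rather than mere gatherability --- is what makes the alignment achievable at all, since Proposition~\ref{bad} shows that on bad gatherable configurations no algorithm can force an approach.
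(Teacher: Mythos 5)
Your reduction of the lemma to ``some GA must occur'' is sound and matches the paper's implicit logic: before the first GA all agents are in state {\tt cruiser}, the first GA therefore falls under Case~2 of that state, and the non-largest participants become {\tt tokens}; so if no token is ever created, every agent runs Procedure $Star$ undisturbed forever. The gap is in the alignment step, which is the heart of the proof. You model the near-$p_j$ instants of agent $i$ and the waiting windows of agent $j$ as two quasi-independent arithmetic progressions and hope that the residues of the former modulo the period $3y$ of the latter become dense. They do not: since both agents execute the \emph{same} deterministic schedule, agent $j$'s execution is an exact time-shift of agent $i$'s by $\delta=t_j-t_i$, so for all large times both agents are in the same phase ($y=x$), the two progressions have \emph{identical} common difference $3x$, and all the residues you consider coincide. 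The ``Diophantine degeneracy'' you mention is not an exceptional case to be ruled out --- it is the generic and eventually permanent situation --- and the proposed escape (passing to a subsequence where $\gcd(x,y(x))$ is small ``because $\delta$ prevents exact synchronization'') establishes nothing: $\delta$ controls the offset between the two copies of the schedule, not any drift between them. The closing appeal to Proposition~\ref{bad} is a plausibility remark, not an argument.

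The correct route, and the one the paper takes, exploits the exact $\delta$-shift rather than fighting it. With $z=\delta-(d-\epsilon)>0$ and $x=\max(\lceil\delta\rceil,\lceil 1/z\rceil)$, pick the stage $s$ of phase $x$ whose ray from $p_i$ is within angle $\alpha$ of $\overrightarrow{p_ip_j}$, so that the point $q$ at distance $d$ along it satisfies $dist(q,p_j)\leq 1/x\leq z$. Agent $i$ reaches $q$ at time $t'=t-\delta+d=t+\epsilon-z$, where $t$ is the end of agent $j$'s $(s-1)$th stage; because each stage ends with a wait of length $x\geq\delta$ at the initial position, agent $j$ is still at $p_j$ at time $t'$ when $\epsilon<z$, and has moved at most $\epsilon-z$ from $p_j$ when $\epsilon\geq z$. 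In either case the two agents are within $\epsilon$, forcing a GA with no equidistribution argument needed. The single fixed ``residue'' your scheme produces is automatically the good one precisely because the waiting window has length $x\geq\delta$ and so absorbs the shift; this is where the strict inequality defining goodness (via $z>0$) enters, not through any density consideration.
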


\begin{proof}
Since the configuration $\{(p_1,t_1), (p_2,t_2),\dots,(p_n,t_n)\}$ is good, there is a pair of agents $i,j$, such that 
$|t_i-t_j| > dist(p_i,p_j)-\epsilon$. Fix such a pair and assume,
without loss of generality, that $t_j>t_i$. Denote $\delta=t_j-t_i$ and $d=dist(p_i,p_j)$. 
Denote by $\beta$ the clockwise angle between the direction North and the line $p_ip_j$.
Let $z=\delta-(d-\epsilon)$. Hence $z>0$.
Let $x =\max( \lceil{ \delta} \rceil,\lceil1/z\rceil)$.
Let  $\alpha=2\cdot \arcsin\left(\frac{1}{2x^2}\right)$ be the angle between two rays of length $x$ in phase $x$ of Procedure $Star$, whose endpoints are at distance~$1/x$. We will prove the following claim.

{\bf Claim.} At some time during the execution of Procedure $Star$ by agent $j$, some 
GA  will happen (and hence some agent will transit to state {\tt token}).


In order to prove the claim,  recall that each stage of phase $x$ in procedure $Star$ takes time $3x$. Let $s=\left \lceil{\frac{\beta}{\alpha}}\right \rceil + 1$. Let $E_1$ be the execution of the $(s-1)$th stage of phase $x$ of Procedure $Star$ during the execution of Algorithm $GATHER(n)$ by agent $i$, and let $E_2$ be the execution of the $(s-1)$th stage of phase $x$ of Procedure $Star$ during the execution of Algorithm $GATHER(n)$ by agent $j$. The execution $E_1$ starts $\delta$ time ahead of the execution $E_2$.

 In the rest of the proof we assume that no GA happens by the end of the execution $E_2$; otherwise the claim is proved. 
 Let $t$ be the time when agent $j$ ends the execution $E_2$.
 Hence, agent $j$ waits at its initial position $p_j$ during the time interval $[t-x,t]$. Agent $i$ ends execution $E_1$ at time $t-\delta$, and starts the $s$th stage of phase $x$ moving along the half-line $L$ starting at its initial position $p_i$. It reaches the point $q$ at distance $d$ from $p_i$ on $L$ at time $t'=t-\delta+d= t+\epsilon-z$,  cf. Figure \ref{geo}. 
 
In phase $x$ of Procedure $Star$, the parameter $\alpha$ is chosen so that the distance between any two points, each located at distance $d' \leq x$ from $p_i$ on  rays with angle $\alpha$ between them is always at most $z$. Now, we show that an approach between agents $i$ and $j$ will happen by the time $t'$.  We consider the following two cases.

\textbf{Case 1.} $\epsilon \geq z$.\\
 Since agent $j$ waits at point $p_j$ until time $t$,  at time $t'=t+\epsilon-z$, agent $j$ is at some point $r$ at distance at most $\epsilon-z$ from point $p_j$. Since $dist(p_j,q) \leq z$, and $dist (r,p_j)\leq \epsilon -z$, we have $dist(q,r) \leq \epsilon$. This implies an approach between agents $i$ and $j$ by the time $t'$.

\textbf{Case 2.} $\epsilon < z$.\\ 
We consider two subcases and show that, in both of them, agent $i$ reaches point $q$ at the time when agent $j$ waits at its initial position $p_j$.

\begin{enumerate}
\item If $z\geq 1$ then $x =\max( \lceil{ \delta} \rceil,1)$. There are two possibilities. If $\lceil{\delta}\rceil \geq 1$ then $x=\lceil{\delta}\rceil$. This implies $t-x < t-x+d \leq t-\delta+d=t+\epsilon-z$. Since $\epsilon<z$, we have $t+\epsilon-z <t$. Hence, agent $i$ reaches point $q$ at the time when agent $j$ waits at its initial position. If  $\lceil{\delta}\rceil < 1$ then $x=1$. This implies $t-x=t-1 \leq t-\delta < t-\delta+d = t+\epsilon-z < t$. In this case also, agent $i$ reaches point $q$ at the time when agent $j$ waits at its initial position.
\item If $z<1$ then $x \geq 1$. This implies $t-x \leq t-1 <t-z < t+\epsilon-z < t$. So, agent $i$ reaches point $q$ at the time when agent $j$ waits at its initial position.  
\end{enumerate} 
Thus, agents $i$ and $j$ approach at the latest at time $t'$. This proves the claim and completes the proof of the lemma.
 \end{proof}

\begin{figure}[tp]
\centering
\includegraphics[scale=0.75]{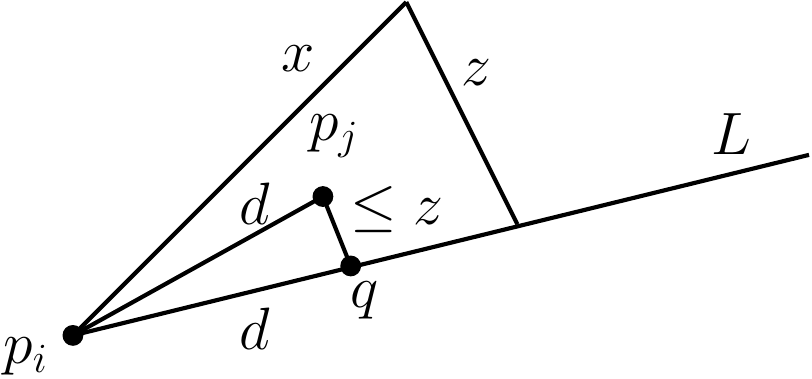}
\caption{Positions of agents $i$ and $j$ at time $t-\delta+d$.}
\label{geo}
\end{figure}

\begin{lemma}\label{no-cruiser}
At some time of the execution of Algorithm $GATHER(n)$,  no agent is in state {\tt cruiser}.
\end{lemma}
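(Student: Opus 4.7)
The plan is to argue by contradiction: suppose some agent $A$ stays in state {\tt cruiser} forever. By Lemma \ref{one-token}, fix a time $t_0$ after which there is at least one agent $T$ in state {\tt token} at some fixed position $q$. Write $p_A$ for $A$'s starting position and set $d=dist(p_A,q)$. I will show that $A$ must eventually enter the $\epsilon$-neighborhood of $q$, so that a GA of Case 3 of the {\tt cruiser} rules occurs and $A$ transits to {\tt shadow}, the desired contradiction.

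First I would argue that $T$ remains at $q$ throughout. An {\tt explorer} issues the final gathering order only after cumulatively seeing $n-1$ distinct agents in states {\tt token} or {\tt shadow}; each agent, once it has entered such a state, stays there until an order is received. But while $A$ is a {\tt cruiser}, $A$ itself is not in such a state, and no {\tt explorer} is either, so at most $n-2$ agents can ever be in {\tt token}/{\tt shadow}. Hence no {\tt explorer} reaches the threshold, no order is ever issued, and $T$ stays at $q$. Next I would prove the geometric claim. Because Cases 2 and 3 would immediately take $A$ out of {\tt cruiser}, while $A$ remains a cruiser it only encounters Case 1 GAs, which it ignores, resuming $Star$; hence $A$ completes arbitrarily high phases of $Star$ in finite real time. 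In phase $x$ of $Star$, $A$ traverses $k=\lceil 2\pi/\alpha\rceil$ rays of length $x$ emanating from $p_A$, with consecutive rays separated by angle $\alpha=2\arcsin(1/(2x^2))$. For $x \geq d$, some ray makes angle $\theta\leq\alpha/2$ with $\overrightarrow{p_A q}$; the foot of the perpendicular from $q$ onto that ray lies at distance $d\cos\theta\leq d\leq x$ from $p_A$ along the ray, and is therefore visited by $A$, at distance at most $d\sin(\alpha/2)=d/(2x^2)$ from $q$. Taking $x$ large enough that $d/(2x^2)<\epsilon$ places $A$ within distance $\epsilon$ of the still-stationary token $T$, triggering a GA involving $T$ and forcing $A$ into state {\tt shadow} by Case 3.

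I expect the main obstacle to be coupling the two components correctly: the persistence of $T$ at $q$ depends on the standing hypothesis that $A$ is a {\tt cruiser}, while the geometric step needs $T$ to be at $q$ when $A$ arrives; since both rely on the same contradiction assumption, the coupling is clean once the explorer-counting argument is carefully phrased. A minor bookkeeping point is to justify the convention that ``ignore the GA'' in Case 1 means $A$ resumes $Star$ from the interruption (the natural reading of the algorithm), which is what guarantees that $A$ actually reaches arbitrarily large phases of $Star$ in bounded time.
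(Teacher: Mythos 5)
Your proof is correct and follows essentially the same route as the paper's: obtain a stationary {\tt token} via Lemma \ref{one-token}, then show that any remaining {\tt cruiser}, by reaching ever higher phases of $Star$, must come within distance $\epsilon$ of that token's position and hence leave state {\tt cruiser}. You in fact supply two details the paper leaves implicit --- the counting argument showing no gathering order can be issued (so the token cannot move) while a cruiser persists, and the explicit estimate $d\sin(\alpha/2)=d/(2x^2)<\epsilon$ --- so nothing further is needed.
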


\begin{proof}
By Lemma \ref{one-token}, at some time of the execution of Algorithm $GATHER(n)$, there will be at least one agent $A$ in state {\tt token}. Let $u$ be the position of this agent at the time when it transits to state {\tt token}.
Any agent in state {\tt cruiser} transits to other states when it gets involved in a GA with agents in state {\tt cruiser} or {\tt token}.
Consider an agent $B$ in state {\tt cruiser}.
Let $y=\max(\lceil f\rceil, \lceil1/\epsilon\rceil)$ where $f$ is the distance between the initial position of the {\tt cruiser} $B$ and point $u$.
 
It is guaranteed that agent $B$ will participate in at least one GA  involving a {\tt cruiser} or a {\tt token}, at the latest  during the execution of phase $y$ of Procedure $Star$ by it. This is because, at the latest during the execution of this phase, agent $B$ approaches $A$.
Hence by the end of the execution of this phase, $B$ transits to some other state. This proves the lemma  
%
\end{proof}

\begin{lemma}\label{one-explorer}
At some time of the execution of Algorithm $GATHER(n)$, there is exactly one agent in state {\tt explorer}.
\end{lemma}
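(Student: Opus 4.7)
The plan is to identify a distinguished explorer $E^*$ that survives all the others, and argue that at the moment just after the last competitor becomes a {\tt shadow}, $E^*$ is the unique agent in state {\tt explorer}.

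First I would check that some explorer is ever created and that only finitely many are ever created. By Lemma \ref{one-token} at least one {\tt token} is produced, and inspecting the algorithm the only rule that creates a {\tt token} is Case 2 of state {\tt cruiser}, which simultaneously promotes the largest participating cruiser to {\tt explorer}; by Lemma \ref{no-cruiser} no new explorers are ever born after cruisers are gone. To each explorer $E$ I would associate its non-empty set of {\tt tokens}, namely the cruisers that became tokens in the same GA that promoted $E$. These token sets are pairwise disjoint across distinct explorers, because any cruiser leaves the cruiser state permanently after its first involvement in a Case 2 GA. Hence the maximum token of each explorer, in the point order from Section~2, is well-defined, and all these maxima are distinct. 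I would let $E^*$ be the explorer whose maximum token $T^*$ is the overall largest.

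Next I would show that every explorer $E_j \neq E^*$ eventually transits to {\tt shadow}. Procedure $Star$ executed by $E_j$ brings its trajectory within distance $1/x$ of every point of the disk of radius $x$ around $E_j$'s starting point in phase $x$, so a GA between $E_j$ and the stationary token $T^*$ must occur in some phase. In that GA, $T^*$ is in state {\tt token} and is larger than every token of $E_j$ by the choice of $E^*$, so condition 2 of state {\tt explorer} fires and $E_j$ becomes a {\tt shadow}. The same argument rules out $E^*$ itself ever transiting to {\tt shadow}, since no token is larger than $T^*$.

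The delicate step, and the one I see as the main obstacle, is ruling out that some $E_j$ short-circuits the elimination by triggering condition 1 first (and then issuing orders that could move $T^*$ before another competitor encounters it). I would handle this by a cumulative-count argument: at any time when two or more explorers coexist and no orders have yet been issued, at most $n-2$ agents are currently in states {\tt token} or {\tt shadow}; since those states are absorbing up to the first order, the cumulative count ``seen to date in states {\tt token} or {\tt shadow}'' available to any explorer is bounded by this current count, hence strictly less than $n-1$, so condition 1 cannot fire. Consequently the explorers distinct from $E^*$ are eliminated one by one via condition 2, and at the moment immediately following the last such elimination $E^*$ is the unique agent in state {\tt explorer}, which is what the lemma asserts.
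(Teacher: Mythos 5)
Your proof is correct and follows essentially the same route as the paper's: single out the globally largest token $T^*$, use procedure $Star$ to force every other explorer into a GA with this stationary token (hence, by condition 2, into state {\tt shadow}), and observe that the explorer owning $T^*$ can never be demoted since no token exceeds $T^*$. The only addition is your counting argument showing that condition 1 cannot fire while two or more explorers coexist; the paper leaves this implicit, and your explicit treatment is a valid tightening of the same argument rather than a different approach.
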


\begin{proof}
By Lemma \ref{no-cruiser}, at some time $t$ of the execution of Algorithm $GATHER(n)$, all agents are in one of the three states: {\tt explorer}, {\tt token} or {\tt shadow}, and no agent will transit to state {\tt token} anymore.  Let $X$ be the largest agent in state {\tt token} at time $t$, and
let $w$ be its position.
Let $D$ be the maximum distance between point $w$ and the position of any agent at time $t$.  


An agent $A$ in state {\tt explorer} executes Procedure $Star$ until it participates in a GA. Agent $A$ transits to state {\tt shadow} if the total count of all agents in states {\tt token} or {\tt shadow} seen to date by agent $A$ is less than $n-1$ and one of the agents in state {\tt token} in the current GA is larger than all {\tt tokens} of $A$. So, it is guaranteed that each agent in state {\tt explorer}, except the one associated with the {\tt token} $X$, will participate in at least one GA with some {\tt token} and transit to state {\tt shadow}, at the latest in phase
$\max(\lceil D\rceil, \lceil1/\epsilon\rceil)$ 
 of the Procedure $Star$ during the execution of Algorithm $GATHER(n)$. On the other hand, the {\tt explorer} associated with the {\tt token} $X$ will never change its state. 
This concludes the proof.
\end{proof}

%


\begin{lemma}\label{gathering}
At some time of the execution of Algorithm $GATHER(n)$, gathering is accomplished.
\end{lemma}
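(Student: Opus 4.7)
The plan is to combine Lemma~\ref{one-explorer} with a geometric analysis of procedure $Star$ to show that the unique surviving {\tt explorer} eventually meets every other agent, collects the common target point, and delivers the gathering order. By Lemma~\ref{one-explorer}, from some time $t^*$ on there is exactly one agent $E$ in state {\tt explorer}, and all remaining $n-1$ agents are frozen in states {\tt token} or {\tt shadow}. So it suffices to show that $E$, while continuing to execute procedure $Star$, (i) raises its cumulative count of observed {\tt token}/{\tt shadow} agents to $n-1$, thereby triggering the first case of the {\tt explorer} state, and (ii) during the subsequent repetition of the last executed phase, revisits every frozen agent to transmit the final gathering order.

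Both (i) and (ii) rest on the same geometric observation about procedure $Star$: in a sufficiently large phase $x$, the trajectory approaches every point at distance at most $D$ from the agent's initial position to within $\epsilon$, for any fixed bound $D$. I would let $p_E$ be the initial position of $E$, let $D$ be the maximum distance from $p_E$ to any of the frozen agents at time $t^*$, and fix $x\geq \max(\lceil D\rceil+1,\lceil 1/\epsilon\rceil)$. In phase $x$, procedure $Star$ traces rays of length $x$ from $p_E$ at angular spacing $\alpha = 2\arcsin(1/(2x^2))$. For a frozen agent at distance $r \leq D \leq x$ from $p_E$, the orthogonal distance from this agent to the nearest such ray is at most $r\sin(\alpha/2) = r/(2x^2) \leq 1/(2x) \leq \epsilon$, and the corresponding projection onto the ray lies inside the traced segment of length $x$. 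Hence the trajectory of phase $x$ forces a GA with every frozen agent.

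Consequently, during phase $x$ the count reaches $n-1$ and $E$ enters the first case, restarting phase $x$. Since no frozen agent has moved, the same geometric argument applies again: the restart yields a GA with each of the $n-1$ other agents, and at every such GA $E$ transmits the order to go to the initial position of the largest of all $n$ agents. This target is well defined because gossiping at earlier GAs has already allowed $E$ to learn all initial positions. After the repeated phase $E$ itself goes to this common point and stops, so all $n$ agents terminate at the same location, completing gathering. The main obstacle I foresee is the geometric inequality $r\sin(\alpha/2)\leq \epsilon$ for all $r\leq D$ during phase $x$, which hinges on the explicit identity $\sin(\alpha/2)=1/(2x^2)$ from the definition of procedure $Star$, on the sizing $x\geq \max(\lceil D\rceil+1,\lceil 1/\epsilon\rceil)$, and on verifying that the relevant orthogonal projections all fall inside the traced rays rather than beyond their endpoints.
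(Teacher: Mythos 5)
Your proposal is correct and follows essentially the same route as the paper's proof: invoke Lemma~\ref{one-explorer} to obtain a unique surviving {\tt explorer}, argue that a sufficiently large phase of procedure $Star$ (of index roughly $\max(\lceil D\rceil,\lceil 1/\epsilon\rceil)$) forces a GA with every frozen agent within distance $D$, so the count reaches $n-1$, and then the repeated phase delivers the gathering order to all $n-1$ frozen agents before the {\tt explorer} joins them at the target point. The only difference is that you spell out the geometric estimate $r\sin(\alpha/2)\leq 1/(2x)\leq\epsilon$ that the paper leaves implicit, which is a welcome addition rather than a deviation.
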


\begin{proof}
By Lemma \ref{one-explorer}, at some time $t'$ of the execution of Algorithm $GATHER(n)$, there is exactly one agent in state {\tt explorer}. Call this agent $A$. Let $D'$ be the maximum distance between agent $A$ and any other agent at time $t'$.
Agent $A$ will participate in GA's with each of the other $(n-1)$ agents at the latest in phase $\max(\lceil D'\rceil, \lceil1/\epsilon\rceil)$  of the Procedure $Star$ during the execution of Algorithm $GATHER(n)$ by $A$. At the latest during this phase, the total count of all agents  in states {\tt token} or {\tt shadow} seen to date by agent $A$ is $n-1$.  Then, agent $A$ repeats the last executed phase of Procedure $Star$ and at each GA orders all participating agents to go to the
initial position $g$ of the largest of all $n$ agents and stop. It subsequently goes to $g$ itself and stops. When all agents get to $g$ and stop, gathering is accomplished.  
%
%
\end{proof}

Lemma \ref{gathering} implies the following theorem.

\begin{theorem}\label{main}
Algorithm $GATHER(n)$ gathers all good gatherable initial configurations of size $n$.
\end{theorem}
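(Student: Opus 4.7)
The plan is to derive Theorem \ref{main} as an essentially immediate corollary of the chain of lemmas already established. Specifically, Lemma \ref{gathering} asserts that, starting from any good gatherable configuration of size $n$, the algorithm reaches a state in which all agents are at a common point and stopped. Since gathering is by definition this very condition, the theorem will follow once I point out that Lemma \ref{gathering} applies to every good gatherable configuration of size $n$, which is exactly the class of inputs quantified in the theorem statement.

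To make the argument self-contained I would organize it as a brief summary of the cascade of lemmas, emphasizing how each one relies on the previous. First I would invoke Lemma \ref{one-token} to obtain an agent that transits to state \texttt{token} at some finite time; this is the base case that guarantees any progress at all, and it is where the goodness hypothesis $|t_i-t_j|>dist(p_i,p_j)-\epsilon$ is actually used, through the angular refinement of Procedure $Star$. Then I would apply Lemma \ref{no-cruiser} to conclude that, after sufficiently many phases of $Star$ by each \texttt{cruiser}, all \texttt{cruiser}s disappear, so every agent is in one of the states \texttt{explorer}, \texttt{token}, or \texttt{shadow}. Next, Lemma \ref{one-explorer} ensures that exactly one \texttt{explorer} persists; that is the \texttt{explorer} associated with the globally largest \texttt{token}, which never meets a larger \texttt{token} and therefore never transits to \texttt{shadow}.

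Finally, I would cite Lemma \ref{gathering}: the unique surviving \texttt{explorer} eventually participates in a GA with every other agent, thus registering the count $n-1$, and then uses the last phase of $Star$ to transmit the gathering order to the point $g$ associated with the largest initial position among the $n$ agents. Since the largest position is a function only of the initial configuration (and not of any agent's private labeling), all agents agree on $g$, travel there, and stop. Hence gathering is achieved for any good gatherable configuration of size $n$, as required.

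The main obstacle in the proof is not in the final stitching (which is essentially one line once the lemmas are in hand) but in ensuring that the hypotheses of all four lemmas are simultaneously met by every good gatherable configuration; in particular, one must observe that Lemma \ref{one-token} is the only one that actually requires goodness, while Lemmas \ref{no-cruiser}, \ref{one-explorer}, and \ref{gathering} depend on it only through the conclusion of the previous lemma. With that observation in place, the chain closes cleanly and Theorem \ref{main} follows.
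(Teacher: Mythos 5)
Your proposal is correct and matches the paper exactly: the paper derives Theorem \ref{main} as an immediate consequence of Lemma \ref{gathering}, which in turn rests on the chain Lemma \ref{one-token} $\rightarrow$ Lemma \ref{no-cruiser} $\rightarrow$ Lemma \ref{one-explorer}, precisely as you describe. Your added observation that only Lemma \ref{one-token} uses the goodness hypothesis directly is accurate and consistent with the paper's structure.
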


\section{Discussion of assumptions}

In this section we discuss the necessity of two important assumptions  under which our gathering algorithm works. The first of them is the knowledge of the exact number of participating agents.
Can this assumption be removed? We will show that the exact knowledge of the number of agents is not necessary, but some partial knowledge about this number is needed. We will prove a necessary and sufficient condition on the knowledge that agents must have in order to gather all good gatherable initial configurations.

In order to formulate this condition, we formalize the initial knowledge of the agents as an {\em assumption set}, which is a set $A=\{a_1,a_2,\dots,a_k, \dots\}$ of integers, such that $1<a_1<a_2<\dots <a_k<\dots$. The set $A$ may be finite or infinite.
An assumption set is given to the agents in advance, and it conveys the following information: the size of the team of agents belongs to the set $A$. Thus the knowledge of the exact number of the agents that we used in the design of algorithm $GATHER(n)$ is a special case when the assumption set is a singleton.  We will say that a deterministic algorithm $\cal A$ working with the assumption set $A$ {\em solves the gathering problem}, if it gathers all good gatherable configurations of $n$ agents, for any $n \in A$. 

An assumption set $A$ is {\em dependent} if there exists an integer $a_k\in A$, a positive integer $r$ and positive integers $c_{i_1},c_{i_2},\dots,c_{i_r}$, for some $i_1<i_2<\dots <i_r<k$, such that $a_k=c_{i_1}a_{i_1}+c_{i_2}a_{i_2}+\cdots + c_{i_r}a_{i_r}$, i.e, if some element of $A$ is a positive linear combination of some smaller elements of it. An assumption set $A$ is {\em independent} if it is not dependent. There exist independent assumption sets of arbitrary large finite size,  for example the set $\{x,x+1,\dots,2x-1\}$ for any $x>1$, but every infinite set is dependent by Schur's theorem. Every singleton set is independent and, e.g.,  the sets $\{2,4\}$ and $\{2,3,7\}$ are dependent.
We will prove the following theorem.

\begin{theorem}\label{iff}
There exists an algorithm solving the gathering problem with the assumption set $A$ if and only if the set $A$ is independent.
\end{theorem}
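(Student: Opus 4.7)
The plan is to prove the equivalence by combining a clustering argument for the forward direction with a modification of Algorithm $GATHER$ for the converse.

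For the direction ``algorithm exists $\Rightarrow$ $A$ is independent,'' I would argue the contrapositive. Assume $A$ is dependent, so that $a_k = c_{i_1}a_{i_1}+\cdots+c_{i_r}a_{i_r}$ with $c_{i_s}\geq 1$ and $i_s<k$, and suppose for contradiction that some algorithm $\mathcal{A}$ solves gathering with assumption set $A$. For each $s\in\{1,\dots,r\}$ I would pick a good gatherable configuration $C^{(s)}$ of exactly $a_{i_s}$ agents (easily arranged, since a single sufficiently delayed latecomer suffices to witness the good condition, and one can pad with additional agents whose good-pair witness survives). Since $a_{i_s}\in A$, algorithm $\mathcal{A}$ gathers $C^{(s)}$, so after some finite time $T_s$ all its agents have stopped; set $T=\max_s T_s$. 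I would build a super-configuration $C^*$ of size $a_k$ by taking $c_{i_s}$ spatial translates of each $C^{(s)}$, placing them so that any two distinct clusters are separated by more than $2T+\epsilon$, with starting times inside each cluster copied verbatim from $C^{(s)}$. Each cluster still contains a pair witnessing the good condition, so $C^*$ is good gatherable; since $|C^*|=a_k\in A$, $\mathcal{A}$ must gather $C^*$. But speed-$1$ motion confines every agent to a ball of radius $T$ around its starting point up to its local stopping time, and the $>2T+\epsilon$ spacing prevents any cross-cluster pair from ever getting within $\epsilon$ of each other. Consequently, every cluster of $C^*$ evolves under $\mathcal{A}$ exactly as the isolated $C^{(s)}$ does, so $\mathcal{A}$ produces $\sum_s c_{i_s}\geq 2$ distinct final positions, contradicting the single-point gathering requirement.

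For the direction ``$A$ independent $\Rightarrow$ an algorithm exists,'' I would design an algorithm $\mathcal{A}^*$ that extends Algorithm $GATHER(n)$ using only the set $A$. Since every infinite subset of $\{2,3,\dots\}$ is dependent by Schur's theorem, independence forces $A=\{a_1<\cdots<a_\ell\}$ to be finite. The cruiser/token/explorer/shadow state machinery of $GATHER$ is kept intact, so by (analogues of) Lemmas~\ref{one-token}--\ref{one-explorer} a unique explorer eventually persists and runs Procedure $Star$ indefinitely, maintaining the set $S$ of initial positions of all agents it has learned through gossip. The only change is to the termination rule: instead of ``$|S|=n$,'' I use a hypothesize-and-verify loop driven by $A$. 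Whenever $|S|$ first equals some value $a\in A$, the explorer hypothesizes the team size to be $a$, provisionally orders every subsequently-met agent to go to and wait at the current largest known initial position $g$, and continues Procedure $Star$ through a verification batch of additional phases whose length depends only on $a$ and $A$. If $|S|$ remains equal to $a$ throughout the batch, the explorer issues the final stop instruction at $g$ and stops there; otherwise $|S|$ has grown and the loop restarts at the new value. Independence would enter the correctness argument as follows: if the true size is $n\in A$ and a verification were to succeed at $|S|=a<n$, the $n-a$ unseen agents would have been organized, while the explorer was elsewhere, into hidden token-clusters, each of which itself would have reached a local verification for some $a_{j_t}\in A$ (otherwise a cluster's own explorer would not have stopped searching and would eventually have met the main explorer during the batch). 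Summing cluster sizes would yield a positive decomposition $n=a+\sum_t a_{j_t}$ with every summand in $A$ and strictly less than $n$, contradicting independence of $A$. Hence the first successful verification must occur at $|S|=n$, and the remaining correctness argument mirrors Lemmas~\ref{one-token}--\ref{gathering}.

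The hard part is the sufficiency direction, specifically making the ``hidden cluster'' argument rigorous: I need to show that an undetected cluster really does correspond to a sub-configuration whose size lies in $A$. My plan is to pick the length of the verification batch as an explicit $A$-dependent function of $a$ large enough that any cluster which stays hidden for the whole batch must itself have traversed all of its cruiser$\to$token transitions and reached its own verification success for some $a_j\in A$; the ``size $\in A$'' constraint then comes not from geometry but from the fact that $\mathcal{A}^*$ only ever freezes into a stopped final state at values of $|S|$ in $A$. This is essentially the clustering construction of the forward direction run \emph{in reverse}, so calibrating the schedule so that a premature verification forces a dependent decomposition, together with verifying that the analogues of Lemmas~\ref{one-token}--\ref{gathering} still apply after the modification, is the main technical step I expect to carry out carefully.
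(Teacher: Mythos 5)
Your proof of the ``only if'' direction (dependent $\Rightarrow$ no algorithm) is correct and essentially identical to the paper's Lemma~\ref{dep}: both build a configuration of size $a_k$ out of well-separated translates of good configurations of sizes $a_{i_1},\dots,a_{i_r}$ and observe that each translate evolves exactly as it would in isolation, producing at least two distinct final gathering points. (The paper separates the discs containing the trajectories by $2\epsilon$; your $2T+\epsilon$ spacing derived from speed-$1$ motion achieves the same thing.)

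The ``if'' direction has a genuine gap. Your explorer terminates irrevocably after a verification batch ``whose length depends only on $a$ and $A$.'' No such bound can work: an unseen agent may start at arbitrary distance from the explorer, and Procedure $Star$ needs time growing with that distance (and with $1/\epsilon$) to reach it, so a batch of any prescribed finite length can end with $|S|=a$ even though $n>a$. Worse, your independence argument does not cover the resulting failure mode: the unseen agents need not have ``organized into hidden token-clusters each reaching a local verification for some $a_{j_t}\in A$'' --- they can be lone cruisers still running $Star$ that have never met anyone (a ``cluster'' of size $1\notin A$), or clusters whose own search is still in progress. Concretely, take $A=\{2,3\}$ (independent) and $n=3$, with two agents forming a good pair and a third agent very far away: your explorer verifies $|S|=2$, stops forever, and the third agent later freezes as a shadow at distance up to $\epsilon$ from the stopped group, so gathering fails; yet $3=2+1$ is not a positive combination of elements of $A$, so independence is never contradicted by this scenario. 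The paper's algorithm $GATHER(A)$ avoids this precisely by never letting the explorer commit: after a temporary gathering under assumption $a$ the explorer remains reactive, and when a latecomer eventually approaches the stopped group the explorer raises its assumption to the next element of $A$ and resumes $Star$ from where it left off. Independence is then applied only to the limiting state, in which every permanently stopped group has size in $A$ and these sizes sum to $n\in A$, forcing a single group. If you want to keep the hypothesize-and-verify framing, you must replace the bounded verification batch by this ``never stop listening'' mechanism; as written, the sufficiency direction does not go through.
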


The larger is the assumption set, the less precise is the knowledge about the number of agents, available to them. Of particular interest are assumption sets $\{2,3,\dots, x\}$ which formalize the assumption of a bound $x$ on the size of the team, given to the agents. These sets are independent only for $x=2$ and $x=3$. Apart from these cases, knowing a bound on the number of agents turns out to be insufficient information to solve the gathering algorithm. 

In order to prove Theorem \ref{iff}, we first prove the following lemma.

\begin{lemma}\label{dep}
If an assumption set $A$ is dependent, then there is no algorithm working with $A$ that solves the gathering problem.
\end{lemma}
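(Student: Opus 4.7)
The plan is to derive a contradiction. Suppose some algorithm $\mathcal A$ solves the gathering problem with a dependent assumption set $A$, and fix a dependence relation $a_k=c_{i_1}a_{i_1}+\cdots+c_{i_r}a_{i_r}$ with $i_1<\cdots<i_r<k$ and positive integer coefficients. Note that $c_{i_1}+\cdots+c_{i_r}\geq 2$: otherwise $r=1$ and $c_{i_1}=1$, forcing $a_k=a_{i_1}<a_k$, a contradiction.

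For every $j\in\{1,\dots,r\}$, I would first pick an arbitrary good gatherable configuration $C_j$ of size $a_{i_j}\in A$ (such configurations exist for every size $\geq 2$; for example, take $a_{i_j}$ agents at distinct points with the starting-time gap of at least one pair strictly larger than the distance of that pair). Because $\mathcal A$ solves the gathering problem with $A$, its execution on $C_j$ alone terminates at some finite time $T_j$, after which all $a_{i_j}$ agents of $C_j$ sit at a common gathering point and never move again. Since speed is bounded by $1$, the entire execution of $\mathcal A$ on $C_j$ stays inside a ball $B_j$ of some finite radius $R_j$ around the starting positions of $C_j$.

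Next, I would build a configuration $C$ of size $a_k$ by taking $c_{i_j}$ rigid copies of each $C_j$ (preserving all internal distances, orientations, and starting-time offsets), translated in the plane so that the balls of radius $R_j+2\epsilon$ containing their executions are pairwise disjoint. Since the internal strict inequality $|t_i-t_j|>\mathrm{dist}(p_i,p_j)-\epsilon$ witnessing goodness survives inside any single copy of $C_1$, Theorem~\ref{char} guarantees that $C$ is gatherable; so $C$ is a good gatherable configuration of size $a_k\in A$, and therefore $\mathcal A$ must gather it.

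The decisive step is to observe that, by the choice of separation and the speed bound $1$, no two agents belonging to different copies can ever come within distance $\epsilon$ of each other during the execution of $\mathcal A$ on $C$: until time $T_j$ every agent of a copy of $C_j$ stays inside its $B_j$, and from time $T_j$ on it is stopped forever. Hence every agent's local history in $C$ coincides exactly with its local history in the isolated execution on its copy $C_j$, and by determinism of $\mathcal A$ each copy is gathered and comes to a halt at its own local gathering point, independently of the others. Because at least two copies exist and occupy disjoint regions, their stopping points are distinct, so $C$ is never gathered at a single point, contradicting the assumption on $\mathcal A$. The main obstacle is securing the a priori finite spatial bounds $R_j$ needed to separate the copies; these exist precisely because $\mathcal A$ is assumed to gather each $C_j$ in finite time, and once agents have stopped they cannot trigger any later interaction between copies.
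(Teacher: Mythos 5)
Your proposal is correct and follows essentially the same route as the paper's proof: take one good configuration of each size $a_{i_j}$, place $c_{i_j}$ sufficiently separated shifted copies of each to form a good configuration of size $a_k$, and observe that by determinism and the spatial separation each copy evolves exactly as in isolation, yielding at least two distinct final gathering points. The only (welcome) additions are your explicit justifications that $c_{i_1}+\cdots+c_{i_r}\geq 2$ and that the finite spatial bounds $R_j$ follow from finite-time gathering and unit speed, both of which the paper leaves implicit.
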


\begin{proof}
Let $A=\{a_1,a_2,\dots,a_k, \dots\}$ be a dependent set, such that $a_k=c_{i_1}a_{i_1}+c_{i_2}a_{i_2}+\cdots +c_{i_r}a_{i_r}$, for some positive integers $c_{i_1},c_{i_2},\dots,c_{i_r}$.
Suppose that there exists an algorithm $\cal A$ that solves the gathering problem working with the assumption set $A$. For any $j\leq r$, let $I_{i_j}$ be a good gatherable initial configuration of size $a_{i_j}$.
Since $a_{i_j}$ belongs to $A$, algorithm $\cal A$ gathers it.
Consider the execution of algorithm $\cal A$ on the initial configuration $I_{i_j}$. In this execution, the trajectories of all agents of $I_{i_j}$ are within some disc $R_{i_j}$. 
Define a {\em shift} of  an initial configuration $I$ to be a configuration in which all points of configuration $I$ are translated by the same vector, and all starting times corresponding to the translated points are left unchanged. Thus, if an algorithm gathers an initial configuration, it also gathers any of its shifts.
Now consider an initial configuration  $C$ of size $a_k$ constructed in the following way. For any $j\leq r$, take $c_{i_j}$ shifts of configuration $I_{i_j}$, such that the respective discs containing trajectories of shifted configurations are at distance $2\epsilon$ from each other,  for different shifts of any configuration $I_{i_j}$, and for any shifts of any pair of configurations $I_{i_j}$ and $I_{i_{j'}}$, for $j\neq j'$. By definition, the initial configuration $C$ is good and its size belongs to $A$, so algorithm $\cal A$ should gather it. However, for any subconfiguration $C'$ of $C$ corresponding to any of the above described shifts, the execution of algorithm $\cal A$ on $C$ restriced to $C'$ must be identical to the execution of $\cal A$ on $C'$ because agents in this subconfiguration do not approach any agent outside it during the execution of algorithm $\cal A$ on configuration $C$. Hence, for all these subconfigurations, agents in each of them will gather at one point and stop. Since, by definition of $a_k$, there are at least two such subconfigurations in configuration $C$, there will be $c_{i_1}+c_{i_2}+\cdots + c_{i_r}\geq 2$ gathering points after reaching which by respective subsets of agents, no agent will ever move again. This contradicts the requirement that algorithm $\cal A$ should gather the initial configuration $C$.
\end{proof}

In order to prove the inverse implication of Theorem \ref{iff}, we indicate a gathering algorithm working  with any independent assumption set. Let $A=\{a_1,a_2,\dots,a_k\}$, where $1<a_1<a_2<\dots <a_k$, be an independent assumption set. The algorithm
$GATHER(A)$ working with this assumption set is a small modification of algorithm $GATHER(n)$ designed in the previous section.
The modification is at the very end of $GATHER(n)$, when the last remaining {\tt explorer} realizes that there are $n-1$ agents in states {\tt token} and {\tt shadow}, orders them to go to the gathering point and stop, and does the same itself. The modification is as follows. 

The agents start by executing algorithm $GATHER(a_1)$ (i.e., they ``assume'' that the total number of agents is $a_1$). Every agent in state
{\tt explorer} works with a current assumption which is an integer $a\in A$. The first assumption is $a_1$. When an {\tt explorer} realizes that the
total number of agents is larger than its current assumption $a$, it changes the assumption to the smallest number $a'>a$ in $A$ and continues algorithm $GATHER$ with this assumption, i.e., switches from executing $GATHER(a)$ to executing $GATHER(a')$. If this happens after a temporary gathering of a subset of agents, an  {\tt explorer} taking part in this temporary gathering continues executing procedure $Star$ from the next phase after it interrupted it. As opposed to algorithm $GATHER(n)$, {\tt tokens} and {\tt shadows} may
get several consecutive orders, corresponding to consecutive assumptions, which they execute one by one. At the end of the modified algorithm
$GATHER(A)$ all agents are grouped in one or more points and no agent moves anymore. The size of each group corresponds to some assumption, i.e., it is an integer from $A$, and the total number of agents is also an integer from $A$. Since the set $A$ is independent, 
there must be a single group gathered at one point, which implies that the gathering is accomplished. Thus we have the following proposition.

\begin{proposition}\label{ind}
For any independent assumption set $A$, algorithm $GATHER(A)$ gathers all good gatherable initial configurations of size belonging to $A$.
\end{proposition}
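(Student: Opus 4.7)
The plan is to show that $GATHER(A)$ halts with all $n$ agents assembled at one point, for any $n\in A$. My argument will split into three steps: termination in a special structural form, a counting identity, and a single use of independence.

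First, I would establish that the algorithm terminates and that the stationary configuration consists of clusters whose sizes belong to $A$. Each {\tt explorer}'s current assumption is monotonically increasing along the finite chain $a_1<a_2<\cdots<a_k$, so each explorer upgrades only finitely many times. Between upgrades the explorer runs Procedure $Star$, and the sweep property exploited in Lemmas \ref{one-token}--\ref{gathering} ensures that any agent at finite distance is discovered eventually. Hence, for the current assumption $a$, either the explorer's count of other agents reaches $a-1$ (triggering a gathering order that lands the whole cluster at the initial position of the largest agent it has seen) or it exceeds $a-1$ (triggering an upgrade). Because the count is also monotone and bounded by $n-1$, and because $n\in A$, an explorer that is never dominated by a larger {\tt token} reaches a stable assumption $a^*\in A$ with exactly $a^*-1$ other agents observed, and then issues its final order. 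Tokens and shadows obey the last order they receive and come to rest at one of finitely many points. The careful bookkeeping required here is the main obstacle: one must argue that no explorer runs $Star$ forever, and that every re-issued order is executed in finite time so the process halts globally.

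Next, I would observe the structural form of the stationary configuration. Let $g_1,\dots,g_m$ be the terminal gathering points with $b_j$ agents at $g_j$. Each $b_j$ equals the final assumption $a^{(j)}\in A$ of the explorer that ordered the cluster at $g_j$, because that explorer issued its order precisely when it had seen $a^{(j)}-1$ other agents. Hence $b_j\in A$ for every $j$, and the sizes partition the population, giving $b_1+\cdots+b_m=n$.

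Finally, I would invoke independence. Suppose for contradiction that $m\ge 2$. Then $1<b_j<n$ for every $j$: the lower bound holds because $b_j\in A$ and $a_1>1$, while the upper bound holds because at least one other group contains a positive number of agents. Writing $n=a_k$ and grouping identical values among the $b_j$'s yields a representation $a_k=c_{i_1}a_{i_1}+\cdots+c_{i_r}a_{i_r}$ with $i_l<k$ and positive integer multiplicities $c_{i_l}$, contradicting the independence of $A$. Therefore $m=1$, so Algorithm $GATHER(A)$ gathers the entire configuration, and the proposition follows.
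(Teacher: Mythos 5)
Your proof is correct and follows essentially the same route as the paper: the paper likewise argues that $GATHER(A)$ terminates with the agents partitioned into stationary groups whose sizes all belong to $A$, and then derives a contradiction with the independence of $A$ if there is more than one group. Your write-up merely spells out the termination and counting steps in more detail than the paper's informal treatment.
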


Theorem \ref{iff} is a direct consequence of Lemma \ref{dep} and Proposition \ref{ind}. 

There is a subtle difference between the capabilities of a gathering algorithm working with assumption set $A$  when $A$ is a singleton and when $A$ is an independent set of size larger than 1. Since in the first case all agents know the size $n$ of the team, after gathering they learn that gathering is accomplished, i.e., not only do they stop forever, but they can declare that the task is finished. This is sometimes called gathering with detection. This is,
however, impossible if the assumption set has more than 1 element. Suppose that the assumption set is $\{2,3\}$ but that the real number of agents is 2. Then the two agents will eventually gather and stop forever, but they can never declare that the gathering is accomplished because the real number of agents could be 3 and the third agent starting from a very distant position could be on its way trying to find another agent. This is actually what would happen for algorithm $GATHER(\{2,3\})$, for a good gatherable configuration of 3 agents two of which are close to each other, one is very distant and the waking times are pretty close: two agents would gather soon and temporarily stop, while the distant agent would eventually join them. Hence gathering with detection of all good gatherable configurations with assumption set $A$ is possible if $A$ is a singleton but it is (in general) impossible otherwise.

The second assumption that we used in the design of our gathering algorithm was the capability of the agents to see each other and communicate upon approach, i.e., when they get at distance at most $\epsilon$, where $\epsilon$ is some positive constant. Is this assumption necessary? In other words, does there exist an algorithm gathering all good gatherable initial configurations of known size in the model in which, before gathering, agents do not see each other and cannot interact? Such agents will be called {\em non-communicating}. For non-communicating agents, a good gatherable configuration is defined by the condition
``there is a pair of agents $i,j$, such that $|t_i-t_j| > dist(p_i,p_j)$'' because the communicating radius $\epsilon$ is 0 in this case.

The following result shows that such an algorithm for non-communicating agents does not exist (even for two agents), i.e., our assumption could not be removed.

\begin{proposition}
There does not exist a deterministic algorithm for non-communicating agents that gathers all good gatherable configurations of size 2.
\end{proposition}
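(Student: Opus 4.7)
The plan is to exploit the total absence of input to non-communicating agents: because they never sense each other and never exchange any information, each agent's behavior reduces to a pre-programmed trajectory in its own coordinate system. More precisely, I would fix an arbitrary deterministic algorithm $\cal A$ for two non-communicating agents and let $f:[0,\infty)\to\mathbb{R}^2$, with $f(0)=\mathbf{0}$, denote the vector-valued function such that an agent executing $\cal A$ is at local position $f(t)$ at time $t$ after its own appearance. Because $\cal A$ is deterministic and both agents receive identical (empty) input streams throughout the execution, the same function $f$ governs the motion of both agents.

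Next I would use the fact that, for gathering, both agents must eventually stop. This forces $f$ to be eventually constant: there exist $T\geq 0$ and a vector $g\in\mathbb{R}^2$ such that $f(t)=g$ for all $t\geq T$. Since all agents share a common compass direction, the global position of agent $i$ at its local time $t$ is $p_i+f(t)$. Hence, after a sufficiently large global time, agent $1$ is frozen at $p_1+g$ while agent $2$ is frozen at $p_2+g$. For gathering, these two points would have to coincide, forcing $p_1=p_2$ and contradicting the basic requirement that distinct agents start at distinct points.

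This argument applies to \emph{any} two-agent configuration with $p_1\neq p_2$, so in particular $\cal A$ fails on every good gatherable configuration of size $2$, which proves the proposition. The only real step is the observation that non-communication collapses the algorithm to a fixed trajectory in time; no cardinality argument like the one in Proposition \ref{bad} is needed, and the strict inequality defining a ``good'' gatherable configuration is never invoked.
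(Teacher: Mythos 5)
There is a genuine gap, and it comes from a misreading of the model rather than from a technical slip. The paper's non-communicating agents cannot see or interact with each other \emph{before} gathering, but the gathering event itself --- two agents occupying the same point at the same time --- is exactly the event that terminates the algorithm: the agents follow their pre-programmed trajectory \emph{until they meet}, and then stop. Your key step, ``for gathering, both agents must eventually stop, which forces $f$ to be eventually constant,'' assumes instead that the agents receive no input ever, so that the stopping time is hard-wired into $f$. Under that reading your argument would show that \emph{no} configuration with $p_1\neq p_2$ is gatherable by non-communicating agents, not even by a dedicated algorithm --- but this contradicts the paper's own setup, which defines good gatherable configurations for non-communicating agents (those with $|t_1-t_2|>dist(p_1,p_2)$) precisely because a dedicated algorithm can gather them: e.g., both agents walk along the known vector $\overrightarrow{p_1p_2}$, the earlier one arrives at $p_2$ before the later one appears there, they co-locate, detect it, and stop. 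An argument that proves too much cannot be the right one here.

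Once the model is read correctly, the trajectory $\phi$ need not be eventually constant, and the final resting points are not $p_i+g$; the question becomes whether a single polygonal line $\phi$ can realize a meeting for \emph{every} good configuration $\{(x,0),(p,2)\}$ with $0<dist(x,p)\leq 1$. A meeting at time $t\geq 2$ requires $p=\phi(t)-\phi(t-2)$, so the set of offsets $p$ that $\phi$ can handle is the range of the polygonal curve $\Phi(t)=\phi(t)-\phi(t-2)$, a countable union of segments and hence of two-dimensional Lebesgue measure zero, while the set of offsets that must be handled is a punctured disc of positive measure. This measure-theoretic (or cardinality) step --- the one you explicitly say is not needed --- is in fact the heart of the proof and cannot be avoided.
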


\begin{proof}
Suppose that such an algorithm $\cal A$ does exist. Let $x$ be a fixed point in the plane, and fix the system of coordinates with origin at $x$. Let $Q=\{p: 0< dist(x,p) \leq 1\}$ and consider all initial configurations $\{(x,0),(p,2)\}$,
for $p\in Q$. All these configurations are good, hence algorithm $\cal A$ should gather all of them. Let $\phi$ be the trajectory resulting from the execution of algorithm $\cal A$ starting at point $x$, i.e., the polygonal line such that an agent executing $\cal A$ and starting at $x$ is at point $\phi(t)$ at time $t$. (For non-communicating agents, a gathering algorithm has no "if" statements, thus it simply produces a polygonal  line, following which the agents should meet at some point).
For any point $p\in Q$ and for the initial configuration $\{(x,0),(p,2)\}$, the agent starting at $p$ starts at time $2$. Let $\psi_p$ be the trajectory of this agent. Hence, for any $t\geq 2$, we have $\psi _p(t)=p+\phi(t-2)$, as the trajectory $\psi_p$ is a shift of trajectory $\phi$ be the vector $p$. If the agent starting at $x$ at time 0 and the agent starting at $p$ at time 2 meet at some time $t \geq 2$, we have $\phi(t)=\psi_p(t)$, i.e., $p=\phi(t)-\phi(t-2)$. Let $\Phi$ be  defined by the formula $\Phi(t)= \phi(t)-\phi(t-2)$, for $t\geq2$. The function $\Phi$ describes a polygonal line. Since, for any $p \in Q$, the agent starting at $x$ at time 0 and the agent starting at $p$ at time 2 must meet at some time $t \geq 2$,
the polygonal line $\Phi$ must cover the entire disc $Q$, i.e., for every $p \in Q$, there must exist a time $t \geq 2$, such that $\Phi(t)=p$. However, this is impossible because every polygonal line has two-dimensional Lebesgue measure 0 (as a union of countably many segments) and the disc $Q$ has two-dimensional Lebesgue measure $\pi$.
\end{proof}

\section{Conclusion}

We characterized all initial configurations of anonymous agents in the plane that can be gathered by a dedicated algorithm, and we answered the question if a universal algorithm can gather all such configurations of a given size. The answer is no, and we showed for which sets of gatherable configurations such a common gathering algorithm exists: we classified all gatherable configurations into two categories of bad and
of good gatherable configurations, and we showed that while all bad ones cannot be gathered by a common algorithm, all good ones can. Then we showed that the knowledge of the exact number of agents is not necessary to gather all good gatherable configurations. We proved a necessary and sufficient condition on the knowledge that an algorithm needs, in order to be able to gather all good gatherable configurations.

We also showed that if agents cannot interact at all  before gathering, then gathering by a common algorithm cannot be accomplished for all good gatherable configurations, even for two agents.
However, the following question remains open. In our model we assume that, upon approach, agents see each other and can communicate
exchanging all information known to date. Is this communication ability necessary? Do our results hold in a weaker model where agents at distance at most $\epsilon$ can see each other but cannot talk? Clearly, for teams of two agents this does not change the ability of gathering (when two agents see each other, there is nothing to talk about: the agent whose current position is larger stops, and the other agent joins it) but we do not know what is the answer for arbitrary teams of agents.


\end{document}